\newcommand{\IR}{\mathbb{R}}
\newtheorem{theorem}{Theorem}
\newtheorem{lemma}{Lemma}
\newtheorem{definition}{Definition}
\newtheorem{observation}{Observation}
\newtheorem{invariant}{Invariant}
\newcommand{\blue}[1]{{\textcolor{blue}{#1}}}
\newcommand{\bluec}[1]{{\color{blue!60!black}{\textbf{#1}}}}
\newcommand{\remove}[1]{}
\newcommand{\QIT}{\textsf{\sc In-Place-Rectangular-Counting-Query-2d-Tree}}
\newcommand{\LPA}{\textsf{\sc LRR-Premitive-Algorithm-candidate-pair-$(p,q)$}}
\newcommand{\LRC}{\textsf{\sc LRC($P$)}}
\newcommand{\USt}{\textsf{\sc Update\_Stair}}
\newcommand{\Expl}{\textsf{\sc Explore}}
\newcommand{\Par}{\mathord{\emph{parent}}}
\newcommand{\LC}{\mathord{\emph{left-child}}}
\newcommand{\RC}{\mathord{\emph{ right-child}}}
\newcommand{\Current}{\mathord{\it Current}}
\newcommand{\State}{\mathord{\it state}}
\newcommand{\Level}{\mathord{\it level}}
\newcommand{\Count}{\mathord{\it Count}}
\title{Variations of largest rectangle recognition amidst a bichromatic point set\footnote{A preliminary version of this work 
titled ``Space-efficient Algorithms for Empty Space Recognition among
a Point Set in 2D and 3D" appeared in CCCG 2011.}}
\author{Ankush Acharyya$^1$ \and Minati De$^2$\thanks{Research supported by DST INSPIRE Faculty Grant (DST-IFA-14-ENG-75).} \and Subhas C. Nandy$^1$\and Supantha Pandit$^1$}
\date{$^1$Indian Statistical Institute, Kolkata 700108, India\\ $^2$Indian Institute of Science, Bangalore 560012, India}
\begin{document}

\maketitle

\begin{abstract}

Classical separability problem involving multi-color point sets is an 
important area of study in computational geometry. In this paper, we 
study different separability problems for bichromatic point set $P = 
P_r \cup P_b$ on a plane, where $P_r$ and $P_b$ represent the set of 
$n$ red points and $m$ blue points respectively, and the objective is 
to compute a monochromatic object of the desired type and of maximum size. 
We propose in-place algorithms for computing (i) an arbitrarily oriented
monochromatic rectangle of maximum size in $\IR^2$, and (ii) an 
axis-parallel monochromatic cuboid of maximum size in $\IR^3$. The time
complexities of the algorithms  for problems (i) and (ii) are 
$O(m(m+n)(m\sqrt{n}+m\log m+n \log n))$ and $O(m^3\sqrt{n}+m^2n\log n)$, 
respectively. As a prerequisite, we propose an in-place  construction of 
the classic  data structure the {\it k-d tree}, which was originally invented by J. 
L. Bentley in 1975. Our in-place variant of the $k$-d tree for a set of 
$n$ points in $\IR^k$ supports both orthogonal range reporting and counting 
query using $O(1)$ extra workspace, and these query time complexities are same 
as the classical complexities, i.e., $O(n^{1-1/k}+\mu)$ and $O(n^{1-1/k})$,
respectively, where $\mu$ is the output size of the reporting query. 
The construction time of this data structure is $O(n\log n)$. Both 
the construction and query algorithms are non-recursive in nature that 
do not need $O(\log n)$ size recursion stack compared to the previously 
known construction algorithm for in-place $k$-d tree and query in it.  
We believe that this result is of independent interest. We also propose 
an algorithm for the problem of computing an arbitrarily oriented rectangle 
of maximum weight among a point set $P=P_r \cup P_b$, where each point in 
$P_b$ (resp. $P_r$) is associated with a negative (resp. positive) real-valued
weight that runs in $O(m^2(n+m)\log(n+m))$ time using $O(n)$ extra 
space.
\end{abstract}
{\bf Keywords:}
Bichromatic point set; obstacle-free rectangle recognition; orthogonal range counting; 
in-place k-d tree; maximum weight rectangle recognition; space efficient algorithms.


\section{Introduction} \label{intro}
Given a bichromatic point set $P = P_r \cup P_b$,  where $P_r$ is the set of $n$ red points and $P_b$ is the set of $m$ blue
points, the basic separability problem is to find a separator $S$ such that the points in $P_r$ and
$P_b$ lie in two different sides of $S$ respectively. The motivation for studying this separability problem
for a bichromatic point set stems from its various applications in facility location, VLSI layout design, image analysis,
data mining, computer graphics and other classification based real life scenarios
\cite{cristianini2000introduction,dobkin1996computing,duda2012pattern,eckstein2002maximum,edmonds2003mining}. The bichromatic
separability problem also has its application to detect obstacle free separators. In the literature, different types of
separators like hyperplane \cite{megiddo1983linear}, circle \cite{o1986computing},
rectangle \cite{cortes2009bichromatic,eckstein2002maximum,van2009identifying},
square \cite{cabello2008covering,sheikhi2015separating} has been studied to optimize the objective function of the
corresponding problem. In this paper, we focus on designing  space-efficient algorithms for the following problems:

\vspace{-0.1in}
\begin{itemize}
 \item[\bluec{P1}] Computing an \blue{{\it arbitrarily oriented monochromatic rectangle of maximum size} ($LMR$)}  
 in $\IR^2$, where a rectangle $U$ is said to be monochromatic if it contains points of
 only one color in the proper interior of $U$, and the \blue{{\em size}} of $U$ is the number of points of that color
 inside or on the boundary of the rectangle $U$.

\vspace{-0.05in}
\item[\bluec{P2}] Computing an \blue{{\it arbitrarily oriented rectangle of maximum weight} $(LWR)$} in $\IR^2$, 
where each point in the set $P_b$ (resp $P_r$) is associated with negative (resp. positive)
real-valued weight, and the weight of a rectangle $U$ is the sum of weights of all the
points inside or on the boundary of $U$.

\vspace{-0.05in}
\item[\bluec{P3}] Computing a \blue{{\it monochromatic axis parallel cuboid\footnote{a solid which has
six rectangular faces at right angles to each other}} ($LMC$)} of maximum size in $\IR^3$.
\end{itemize}
\vspace{-0.1in}

A rectangle of arbitrary orientation in $\IR^2$ is called \blue{ red} if it does not contain 
any blue point in its interior{\footnote{blue points may appear on the boundary}}. The \blue{ 
largest red rectangle ($LRR$)} is a red rectangle of maximum size. Similarly,  the \blue{largest blue 
rectangle ($LBR$)} is defined. The \blue{ largest monochromatic rectangle ($LMR$)} is either 
$LRR$ or $LBR$ depending on which one is of maximum size. Here, the objective is to compute the 
$LRR$. In $\IR^3$,  we similarly define the \blue{ largest axis parallel red cuboid ($LRC$)}, 
i.e. a cuboid containing the maximum number of red points and no blue point in its interior$^1$. 
 We use $x(p)$ and $y(p)$ to denote the $x$- and $y$-coordinate of a point $p\in P$  respectively.

Several variations of this problem are well studied in the literature. In the well-known {\it 
maximum empty rectangle} (MER) problem, a set $P$ of $n$ points is given; the goal is to 
find a rectangle (axis parallel/arbitrary orientation) of maximum area that does not contain 
any point of $P$ in its interior (see \cite{aggarwal1987fast,chazelle1986computing,
naamad1984maximum,orlowski1990new} for MER of fixed orientation, and \cite{CND,asish} 
for MER of arbitrary orientation). For fixed orientation version, the best-known algorithm runs in 
$O(n\log^2 n)$ time and $O(n)$ space \cite{aggarwal1987fast}. For arbitrary 
orientations version, the best-known algorithm runs in $O(n^3)$ time using $O(n)$ space 
\cite{CND}. 

For the bichromatic version of the problem, Liu and Nediak \cite{liu2003planar} designed 
an algorithm for finding an axis parallel $LRR$ of maximum size in $O(n^2\log n+nm+m\log m)$
time using $O(n)$ space. Backer and Keil \cite{backer2009bichromatic} improved the time 
complexity to $O((n+m)\log^3 (n+m))$ using $O(n\log n)$ space adopting the divide-and-conquer 
approach of Aggarwal and Suri \cite{aggarwal1987fast}. They also proposed an $O((n+m)\log 
(m+n))$ time algorithm for finding an axis-parallel {\em red} square of maximum size. Recently, 
Bandyapadhyay and Banik \cite{bandyapadhyay2017polynomial} proposed an algorithm for finding the $LRR$ in arbitrary 
orientation  using $O(g(n,m)\log(n+m)+n^2)$ time and $O(n^2+m^2)$ space, where $g(n,m)\in 
O(m^2(n+m))$ and $g(n,m)\in \Omega(m(n+m))$.  

Other variations of the $LRR$ problem, studied in the literature are as follows. For a given 
bichromatic (red,blue) point set, Armaselu and Daescu \cite{Daescucccg16} considered the 
problem of finding a rectangle of maximum area containing all red points and minimum 
number of blue points. In $\IR^2$, the axis-parallel version of this problem can 
be solved in $O(m\log m +n)$ time and the arbitrary oriented version requires $O(m^3 
+ n\log n)$ time. In $\IR^3$, the axis-aligned version of the problem can be solved 
in $O(m^2(m+n))$ time. Eckstein {\it et al.} \cite{eckstein2002maximum} considered the 
axis-parallel version of the $LRR$ problem in higher ($d \geq 3$) dimensions. They showed that, 
if the dimension $d$ is not fixed, the problem is $NP$-hard. They presented an 
$O(n^{2d+1})$ time algorithm for any fixed dimension $d \geq 3$. Later, Backer and 
Keil \cite{backer2010mono} improved the time bound of the problem to 
$O(n^d\log^{d-2} n)$.  Cort{\'e}s {\it et al.} \cite{cortes2009bichromatic} considered
the problem of removing as few points as possible from the given bichromatic point 
set such that the remaining points can be enclosed by two axis-parallel rectangles 
$A_R$ and $A_B$ (may or may not be disjoint), where $A_R$ (resp. $A_B$) contains all 
the remaining red (resp. blue) points. They solved this problem in $O(n^2\log n)$ 
time using $O(n)$ space. The problem of separating bichromatic point sets by two 
disjoint axis-parallel rectangles such that each of the rectangles is monochromatic, 
is solved in $O(n\log n)$ time by Moslehi and Bagheri \cite{moslehi2016separating} (if 
such a solution exists). If these two rectangles are of arbitrary orientation then 
they solved the problem in $O(n^2\log n)$ time. Bitner {\it et al.} \cite{bitner2010minimum} 
studied the problem of computing the minimum separating circle, which is the 
smallest circle containing all the points of red color and as few points as possible 
of blue color in its interior. The proposed algorithm runs in $O(nm\log m+n\log n)$ 
time using $O(n+m)$ space. They also presented an algorithm for finding the largest 
separating circle in $O(nm\log m+k(n+m)\log (n+m))$ time using $O(n+m)$ space, where 
$k$ is the number of separating circles containing the smallest possible number of 
points from blue point set. The problem of covering a bichromatic point set with 
two disjoint monochromatic disks has been studied by Cabello {\it et al.} 
\cite{cabello2013covering}, where the goal is to enclose as much points as possible 
in each of the monochromatic disks. They solved the problem in 
$O(n^{\frac{11}{3}} \polylog~n)$ time. If the covering objects are unit disks or unit 
squares, then it can be solved in $O(n^{\frac{8}{3}}\log^2 n)$ 
and $O(n\log n)$ time respectively \cite{cabello2008covering}. 

The weighted bichromatic problems are also studied in the literature. The smallest 
maximum-weight circle for weighted points in the plane has been addressed 
by Bereg {\it et al.} \cite{bereg2015smallest}. For $m$ negative weight points and $n$ 
positive weight points they solved the problem in $O(n(n+m)\log (n+m))$ time 
using linear space. For a weighted point set Barbay {\it et al.} \cite{barbay2014maximum} 
provided an $O(n^2)$ time algorithms to find the maximum weight axis-parallel square.

\subsection*{Our Contribution}

Given a bichromatic point set $P=P_r \cup P_b$ in a rectangular region
$\cal A \subseteq$ $\IR^2$, where $P_r$ and $P_b$ are set of $n$ red points and $m$ blue 
points respectively, we design an in-place algorithm\footnote{An in-place algorithm
is an algorithm where the input is given in an array, the execution of the algorithm is
performed using only $O(1)$ extra workspace, and after the execution of the algorithm all
the input elements are present in the array.} for finding an arbitrarily oriented $LMR$ of
maximum size in $O(m(n + m)(m\sqrt{n} + m \log m + n\log n))$ time, using $O(1)$ extra workspace.
We also show that the axis-parallel version of the $LMR$ problem in $\IR^3$ 
(called the $LMC$ problem) can be solved in an in-place manner in $O(m^3\sqrt{n}+m^2n\log n)$
time using $O(1)$ extra workspace. As a prerequisite of the above problems, we propose an 
algorithm for constructing a k-d tree with a set of $n$ points in $\IR^k$ given 
in an array of size $n$ in an in-place manner such that the  orthogonal
range counting query can be performed using $O(1)$ extra workspace. The construction 
and query time of this data structure is $O(n\log n)$ and $O(n^{1-1/k})$, respectively.
Finally, we show that if the points in $P_r$ (resp. $P_b$) have positive (resp. 
negative) real-valued weight, then a rectangle of arbitrary orientation 
with maximum weight (called $LWR$) can be computed in $O(m^2(n+m)\log (n+m))$ time 
using $O(n)$ space.

\section{In-place k-d tree} \label{in-place}
To perform the orthogonal range reporting query, Bentley~\cite{Bentley75} invented
k-d tree   in 1975. It  is a binary tree in which every node is a k-dimensional 
point. Every non-leaf node can be thought of being associated with one of the 
k-dimensions of the corresponding point, with a hyperplane perpendicular to that 
dimension's axis, and implicitly this hyperplane  splits the space into two  
half-spaces. Points to the negative side of this \emph{splitting hyperplane} are 
represented by the left subtree of that node and points in the 
positive side of the hyperplane 
are represented by the right subtree\footnote{For a hyperplane $x_i=c$, its
{\it negative} (resp. {\it positive}) side is the
half-space $x_i<c$ (resp. $x_i > c$), where $x_i$ is the $i$-th coordinate 
of a $k$ dimensional point $(x_1, x_2, \ldots, x_k)$}. Depending on the level of a node 
going down the tree, the splitting dimension is chosen one after another in a
cyclic manner. Each node $v$ of the tree is associated implicitly with a 
rectangular region of space, called \emph{cell($v$)}. The cell 
corresponding to the root of the tree is the entire $\IR^k$. A child's cell is 
contained within its parent's cell, and it is determined by the splitting
hyperplane stored at the predecessor nodes.

Br\"{o}nnimann~{\it et al.}~\cite{BCC04}  mentioned an in-place version of
the k-d tree. We note that their approach for both constructing the data 
structure and querying in the data structure are recursive, and need to 
remember the subarray and the cell in which the recursive call is invoked. 
As a result, there is a hidden $O(\log n)$ space requirement for system 
stack. We present an alternate variant of in-place $k$-d tree data structure 
that supports both reporting and  counting query for orthogonal query
range with same query time as the classical one.  The advantage of this 
in-place variant is that both construction and query algorithms are 
non-recursive, and it takes only $O(1)$ extra workspace during the execution 
apart from the array containing the input points. The in-place organization 
of this data structure is similar to the in-place min-max priority search 
tree proposed by De {\it et al.}~\cite{de2013place}. 

\subsection{Construction of in-place k-d tree} \label{preprocessing2d}
Let us  consider that a set $P$ of $n$ points in $\IR^k$ is given in an array $P[1,\ldots,n]$.  
We propose an in-place algorithm to construct the k-d tree $\cal T$  in the array $P$.  Here, $\cal T$
is a binary tree of height $h = \lfloor \log n \rfloor$, such that the levels $0,1,\ldots,h-1$ are full
and level $h$ consists of $\varkappa=n - (2^h-1)$ nodes which are aligned as far as possible to the left.
At the end of the construction, the tree $\cal T$  is stored implicitly in the given array $P$. In other
words, we  store the  root of the  tree in  $P[1]$, its left and right children in $P[2]$ and $P[3]$, etc.
This allows us to  navigate  $\Par(P[i])$ which is at $P[\lfloor\frac{i}{2}\rfloor]$, and $\LC(P[i])$
and $\RC(P[i])$, if they exist, which are at $P[2i]$ and $P[2i+1]$, respectively.

\begin{figure}
 \noindent\begin{minipage}[b]{.42\textwidth}
 \centering{\includegraphics[width=2.2in]{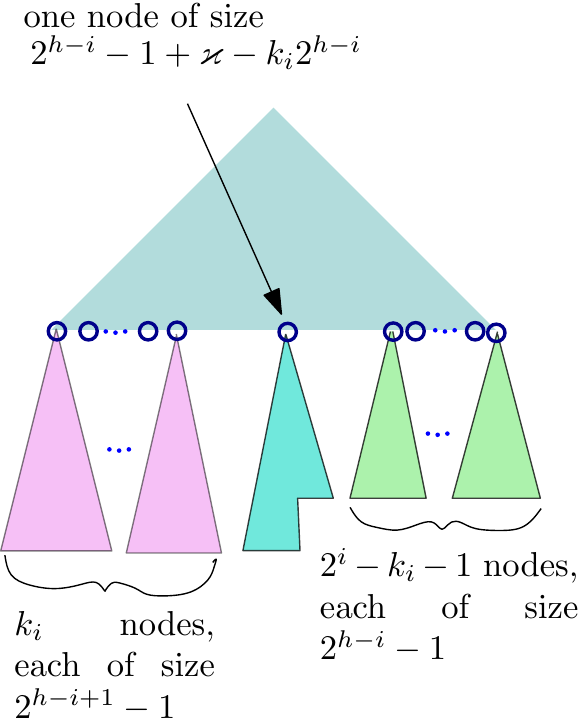}}
\centerline{$(a)$}
\end{minipage} 
\hfill
\begin{minipage}[b]{.58\textwidth}
\centering{\includegraphics[width=4in]{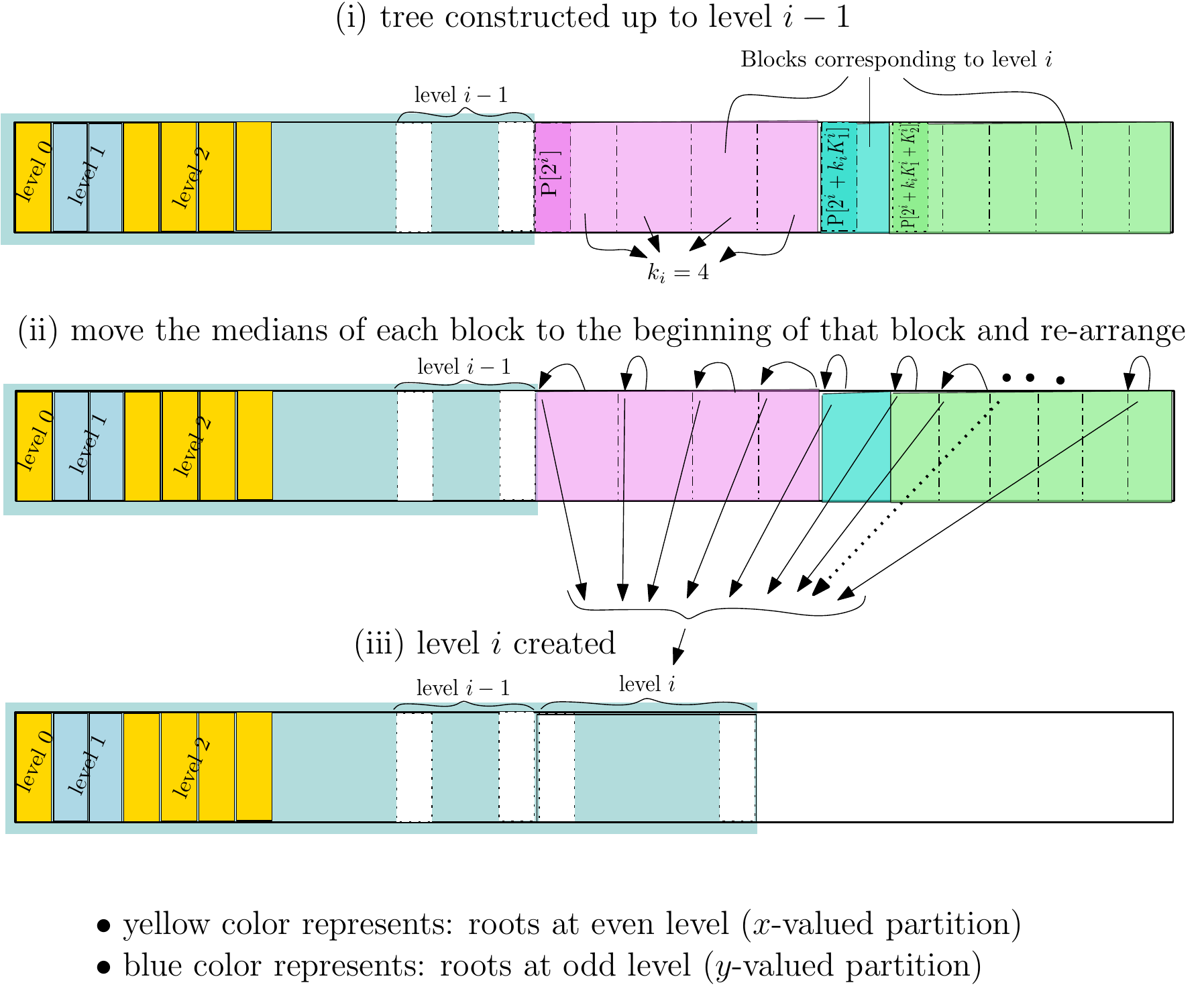}}
\centerline{$(b)$}
\end{minipage}
\caption{(a) k-d tree after constructing its $(i-1)$-th level (stripped), and (b) its array representation up to $i$-th level}\vspace{-0.15in} 
\label{2d-tree-1}
\vspace{-0.1in}
\end{figure}

Note that there are $2^i$  nodes in the level $i\neq h$  of the tree $\cal T$. As the number of leaf
nodes in a full tree of height $h-i$ is $2^{h-i}$,  so  there are $k_i = \lfloor{\frac{n-\varkappa}{2^{h-i}}}\rfloor$
nodes at level $i$ ($0<i<h$) that  are roots of subtrees, each of size $K_1^i= 2^{h+1-i} - 1$. If $k_i=2^i$, then
all the subtrees at level $i$ are full, and the number of nodes in each subtree is $2^{h+1-i}-1$. Otherwise,
we have $k_i< 2^i$, and level $i$ ($0<i<h$) consists of, from left to right,
\begin{itemize}
 \item $k_i$ nodes which are roots of subtrees, each of size $K_1^i= 2^{h+1-i} - 1$,
 \item one node that is the root of a subtree of size $K_2^i = 2^{h-i} - 1 + \varkappa - k_i\cdot2^{h-i}$, and
 \item  $2^i - k_i-1$ nodes which are roots of subtrees, each of size $K_3^i = 2^{h-i} - 1$.
\end{itemize}
See Figure \ref{2d-tree-1} for an illustration.

Here, we introduce the notion of {\em block} and {\em block median}.
Assume that  $0<i<h$. We refer to the portion of the array  $P[(2^i  +(j-1)K_1^i),\ldots, (2^i+jK_1^i-1)]$ as block $B_j^i$, for $j\leq k_i$.  The portion of the array   
$P[(2^i  +k_iK_1^i),\ldots, (2^i  +k_iK_1^i+K_2^i-1)]$ is  referred to as block $B_{k_i+1}^i$, and   $P[(2^i  +k_iK_1^i+K_2^i  + (j-1)K_3^i), \ldots 
(2^i  +k_iK_1^i+K_2^i  + jK_3^i-1)]$ are referred to as blocks $B_j^i$, for all $j>k_i+1$.
For $i=0$, we refer to the whole array $P[1,\ldots, n]$ as $B_1^0$. For $i=h$, we refer to the array element $P[2^h+j]$ as block $B_j^h$, where $1\leq j\leq \varkappa$.
For a block $B_j^i$ ($0<i<h$) of size $K_1^i$ (resp. $K_3^i$), we denote block median  $m_j^i$ as a point in $B_j^i$ whose $(i \mod k)$-th coordinate value is $\lceil\frac{K_1^i}{2}\rceil$-th (resp. $\lceil\frac{K_3^i}{2}\rceil$-th) smallest among all the points in  $B_j^i$.
If the size of  $B_j^i$ is $K_2^i$, then depending on  whether $K_2^i-(2^{h-i}-1)< 2^{h-i-1}-1$ or $K_2^i-(2^{h-i}-1)\geq 2^{h-i-1}-1$, we refer the
block median  $m_j^i$ as a point in $B_j^i$  whose $(i \mod k)$-th coordinate value is
 $K_2^i- (2^{h-i-1}-1)$-th  or $2^{h-i}$-th smallest among all the points in $B_j^i$.   For block $B_1^0$, depending on  whether $n-(2^{h}-1)< 2^{h-1}-1$ or 
 $n-(2^{h}-1)\geq 2^{h-1}-1$, we refer the block median  $m_1^0$ as a point in  $B_1^0$ whose  $1$st coordinate value is $n- (2^{h-1}-1)$-th  or $2^{h}$-th smallest  among all the points in $B_1^0$.

Our algorithm constructs the tree level by level. 
After constructing the
$(i-1)$-th level of the tree, it maintains the following invariants:
\begin{invariant}
\begin{itemize} 
\item[(i)] The subarray $P[1, \ldots, 2^i-1]$ stores levels $0,1,\ldots, i-1$ of the tree.
 \vspace{-0.05in}     
 \item[(ii)]  Block $B_j^i$ contains  all the elements of the $j$-th leftmost subtree of
 level $i$, for $j\in\{1,\ldots,2^i\}$ $(j\in \{1,\ldots, \varkappa\}$ when $i=h)$.   
\end{itemize}
\end{invariant}

At the first iteration of the algorithm,  we  find the block median $m_1^0$ using the linear time 
in-place median finding algorithm of Carlsson and Sundstr{\"o}m \cite{CarlssonS95}, and swap it with $P[1]$. Next,
we  arrange the subarray $P[2,\ldots, n]$ such that all the elements whose first coordinate value
is greater than $m_1^0$ appear before all the elements whose first coordinate value is
less than $m_1^0$.  We can do this arrangement in linear time using $O(1)$ extra space.

Note that after the first iteration, both the invariants are maintained.

Assuming that the tree is constructed up to level $(i-1)$, now, we construct the tree up to level $i$ by
doing the following:
\begin{enumerate}
\item  Find block median $m_j^i$ from each block $B_j^i$ and swap it with the first location of block
$B_j^i$.  Using the  median finding algorithm of
\cite{CarlssonS95}, this needs a total of $O(n)$ time for all the blocks in this ($i$-th) level.

\item Now depending on the median value $m_j^i$ we arrange  the elements of each block $B_j^i$ such
that all the elements having $(i \mod k)$-th coordinate value greater than $m_j^i$ appears before all 
the elements having  $(i \mod k)$-th coordinate value less than $m_j^i$. Thus the block $B^i_j$
splits into two parts, named {\it first half-block} and {\it second half-block}. This step again needs 
time proportional to the size of each block, and hence $O(n)$ time in total.

\item Now, we need to move all $m_j^i$  stored at the first position of each block to the correct
position of level $i$ of the tree. To do this we do the following.  First, we  move the last block
median $m_{2^i}^i$  next to $m_{2^i-1}^i$ by  two swaps; (i) swap $m_{2^i}^i$ with the first element of the 
second half-block of $B_{2^i-1}^i$, and (ii) swap $m_{2^i}^i$ with the first element of the first half-block of 
$B_{2^i-1}^i$. Thus, after this swapping step all the 
elements in the block $B_{2^i-1}^i$ that are less than $m_{2^i-1}^i$ will stay before the elements greater 
than $m_{2^i-1}^i$. Now, we will move both the pair ($m_{2^i-1}^i$, $m_{2^i}^i$) just after $m_{2^i-2}^i$. 
It can be shown that, for the move of each element of this pair, we need a pair of swaps as explained above. 
Next, we will move  $m_{2^i-2}^i$, $m_{2^i-1}^i$ and  $m_{2^i}^i$ by 
swapping (as mentioned above) next to $m_{2^i-3}^i$. In this way, we will continue until all 
the block medians $\{m_j^i| j\in 2^i\}$ will become consecutively placed. Using $O(1)$ space, this can be done in 
linear time\footnote{The reason is that, during this step of execution each element is moved backward 
from its present position in the array at most once.}.
\end{enumerate}

Step 3 ensures that both the invariants are maintained after this iteration.

At the end of  $h$-th iteration, we have the tree $\cal T$  stored implicitly in the array $P$.
The correctness of this algorithm follows by observing that the invariants are correctly maintained.   
As there are $O(\log n)$ iterations and each iteration takes $O(n)$ time, in total the algorithm 
takes $O(n \log n)$ time.

\begin{lemma} \label{preprocess}
Given a set of $n$ points in $\IR^k$ in an array $P$, the in-place construction of $k\text{-}d\text{-}tree$
takes $O(n\log n)$ time and $O(1)$ extra workspace. 
\end{lemma}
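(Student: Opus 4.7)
The plan is to prove this by induction on the level index $i$, showing that after the $i$-th iteration of the construction algorithm both parts of the stated invariant hold, and then to bound the total work by $O(n\log n)$.

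First I would set up the base case: after the initial iteration we find the block median $m_1^0$ of the whole array $B_1^0=P[1,\ldots,n]$ in-place using the linear time median-finding procedure of Carlsson and Sundstr\"om~\cite{CarlssonS95}, swap it into $P[1]$, and then partition the remaining $n-1$ entries around $m_1^0$ using the standard two-pointer in-place partition so that the elements of larger first-coordinate value occupy positions $P[2,\ldots]$ and the smaller ones follow. This places the root correctly, makes $P[1,\ldots,2^1-1]$ store level $0$, and leaves the two halves of the array in exactly the positions prescribed by blocks $B_1^1$ and $B_2^1$, so the invariant holds for $i=1$.

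For the inductive step I assume the invariant after level $i-1$: the first $2^i-1$ cells of $P$ contain the top $i$ levels of $\cal T$, and every block $B_j^i$ already contains precisely the point set destined for the $j$-th subtree rooted at level $i$. I then execute the three documented steps on level $i$. In Step 1, the appropriate block median $m_j^i$ (chosen by the threshold rules that distinguish the cases $K_1^i$, $K_2^i$, $K_3^i$ according to whether the subtree hanging off that root is full, partially filled, or minimum-sized, and using the $(i \bmod k)$-th coordinate as the splitting key) is located inside $B_j^i$ in time linear in $|B_j^i|$ and moved to the first cell of the block. In Step 2, each block is partitioned around its stored median with respect to the splitting coordinate, in time linear in the block size. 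Summing over all blocks in this level yields $O(n)$ work, since the blocks partition $P[2^i,\ldots,n]$.

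Step 3 is the main technical obstacle, because the block medians, currently scattered at the first cell of each block, must be compacted into the consecutive target range $P[2^i,\ldots,2^{i+1}-1]$ that represents level $i$ of the heap-ordered array, and this must be done without disturbing the two-part ``first half-block / second half-block'' ordering produced in Step 2 and without using more than $O(1)$ extra workspace. The key observation, which I would formalise, is that moving a median backward by one block can be realised by the explicit pair of swaps described in the text: swapping the migrating element first with the head of the preceding block's second half-block, and then with the head of its first half-block, preserves the partition of that preceding block (the two displaced elements remain on their correct sides of the median) while advancing the migrating element by exactly one block. Applying this gadget iteratively from the rightmost median leftward, I would verify by a short charging argument that each array entry is moved backward at most once during the entire level, so Step 3 also runs in $O(n)$ time with $O(1)$ workspace. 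After Step 3 the medians occupy $P[2^i,\ldots,2^{i+1}-1]$ in left-to-right order, and each block $B_j^{i+1}$ contains exactly the points that must lie in the corresponding subtree at level $i+1$, re-establishing the invariant.

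Finally, since the tree has height $h=\lfloor\log n\rfloor$ and each of the $h$ iterations performs $O(n)$ work while the algorithm is manifestly non-recursive and manipulates only a constant number of indices and temporary cells, the total construction cost is $O(n\log n)$ time with $O(1)$ extra workspace, as claimed.
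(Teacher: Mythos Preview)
Your proposal is correct and follows essentially the same approach as the paper: a level-by-level construction where each iteration performs in-place median selection, in-place partitioning of every block, and then the swap-based compaction of the block medians into positions $P[2^i,\ldots,2^{i+1}-1]$, with correctness argued via maintenance of Invariant~1 and the time bound obtained from $O(\log n)$ levels at $O(n)$ work each. Your write-up is in fact somewhat more explicit than the paper's own argument (you spell out the induction and the charging for Step~3), but the underlying ideas are identical.
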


\subsection{Orthogonal range counting query in the in-place k-d tree}\label{counting2d}
For the simplicity of explanation, we illustrate the range counting query for points in $\IR^2$. 
We can  easily generalize  it for points in $\IR^k$, for any fixed $k$. Given a rectangular range 
$Q = [\alpha, \beta] \times [\gamma, \delta]$ as a query, here, the objective is to return a count 
of the number of elements in $P$ that lie in the rectangular range $Q$.

 In the  traditional model, to answer counting query in $O(\sqrt{n})$ time each node in pre-proceesed k-d tree
 stores the subtree size.  For our case, we cannot afford  to store the subtree size along with each node of the
 in-place k-d tree. However, if we have the information of the level  $\ell$ of a node $P[t]$, then we can
 on-the-fly compute  the subtree size as follows. Note that $P[t]$ is $r=t-(2^{\ell}-1)$-th left most node at
 $\ell$-th level of the  tree $\cal T$. Depending on whether $r\leq k_{\ell}$, $r= k_{\ell}+1$ or $r \geq k_{\ell}+2$,
 the subtree size of the node corresponding to $P[t]$ is $K_1^{\ell}$, $K_2^{\ell}$ or $K_3^{\ell}$.
 We want to remind the reader that $k_{\ell} = \lfloor{\frac{n-\varkappa}{2^{h-\ell}}}\rfloor$,
 $K_1^{\ell}= 2^{h+1-\ell} - 1$,  $K_2^{\ell}= 2^{h-\ell} - 1 + \varkappa - k_{\ell}\cdot2^{h-\ell}$
 and $K_3^{\ell}= 2^{h-\ell} - 1$, where $\varkappa=n-(2^h-1)$. 

On the other hand, the traditional query algorithm is a recursive algorithm that starts from the root of the tree.
At a node $v$, (i) if $Q\cap cell(v)=\emptyset$, then it returns 0; (ii) else if $cell(v)\subseteq Q$, then it returns
the subtree size of $v$; (iii) otherwise, it recursively  counts in the two children of $v$ and returns by adding
 these  counts, accordingly. The main  issue in implementing  this  algorithm  in the in-place model is that it
 needs  $O(\log n)$ space for system stack to have the knowledge of  the corresponding $cell$ of a node.  To tackle
 this situation, we have a new geometric observation which leads to a non-recursive algorithm in the  in-place model.

At a node $v$, we can test  whether the cells corresponding  to both the children  are intersecting the query
region $Q$ or not, by checking whether the  splitting plane stored at $\Par(v)$ is intersecting the query region
$Q$ or not. If the splitting plane does not intersect, then  the one of the child's cell that has non-empty intersection
with  $Q$,  can be decided by checking in which side of the hyperplane the region $Q$  lies. This simple trick works
because when we are at a node $v$, we know that the cell corresponding to its parent has non-empty intersection with $Q$.
 The following observation plays a crucial role here.

 \begin{observation}\label{obj:1}
 If the left (resp. right, bottom, and top)  boundary of $cell(v)$  intersects the  query region $Q$, then the
 left (resp. right, bottom, and top)  boundary of $cell(v')$ corresponding to the left (resp. right, left, right) 
 child ($v'$) of node $v$ also intersects the  query region $Q$.
\end{observation}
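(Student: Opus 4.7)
My plan is a direct case analysis on the splitting direction of the hyperplane stored at node $v$, using the explicit form of the children's cells together with the descent invariant of the query algorithm (a node $v'$ is entered only after $cell(v')\cap Q\neq\emptyset$ has been certified via the test on the splitting plane). Write $cell(v)=[x_L,x_R]\times[y_B,y_T]$ and $Q=[\alpha,\beta]\times[\gamma,\delta]$. If $v$ is split by a vertical line $x=c$, then $cell(\LC(v))=[x_L,c]\times[y_B,y_T]$ and $cell(\RC(v))=[c,x_R]\times[y_B,y_T]$; if $v$ is split by a horizontal line $y=c$, then $cell(\LC(v))=[x_L,x_R]\times[y_B,c]$ and $cell(\RC(v))=[x_L,x_R]\times[c,y_T]$. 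With these formulas, each of the eight (boundary, child) combinations in the statement is either a \emph{full inheritance}---the child's named boundary coincides with $v$'s---or a \emph{truncated inheritance}---the child's named boundary is a sub-segment of $v$'s, cut by the splitting plane.

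For each of the four implications---(a) left$\to\LC(v)$, (b) right$\to\RC(v)$, (c) bottom$\to\LC(v)$, (d) top$\to\RC(v)$---I would split into sub-cases on the split direction. In the full-inheritance sub-case (vertical split for (a) and (b); horizontal split for (c) and (d)) the two boundaries are literally the same segment and the implication is immediate. In the truncated sub-case I combine the hypothesis with the descent invariant to conclude that the truncated sub-segment still meets $Q$: for instance, in (a) when $v$ is split horizontally at $y=c$, the hypothesis forces $x_L\in[\alpha,\beta]$ and the descent invariant forces $[y_B,c]\cap[\gamma,\delta]\neq\emptyset$, so $\{x_L\}\times[y_B,c]$ (the left boundary of $cell(\LC(v))$) meets $Q$. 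The three remaining truncated sub-cases are structurally identical after renaming coordinates and directions.

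The main obstacle is precisely the truncated sub-case: stripped of the descent invariant, the bare implication would be false (e.g., in (a) with horizontal split and $c<\gamma$, the left boundary of $cell(\LC(v))$ lies strictly below $Q$ while $v$'s left boundary still meets $Q$). The essential step is therefore to make explicit that the observation is only invoked at a node $v'$ the query algorithm actually visits, so that $cell(v')\cap Q\neq\emptyset$ is available; once this is in place, each sub-case reduces to a one-line interval-intersection check.
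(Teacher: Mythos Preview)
The paper states this observation without proof. Your case analysis is correct, and you have put your finger on a genuine subtlety: read literally, with ``boundary'' meaning the finite edge of the rectangular cell, the implication fails without the extra hypothesis that the child $v'$ is actually visited (so that $cell(v')\cap Q\neq\emptyset$). Your counterexample---a horizontal split at $y=c$ with $c<\gamma$---shows this, and your repair via the descent invariant is exactly what is needed to rescue the segment reading in each truncated sub-case.

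That said, there is a shorter route that is almost certainly what the authors intend and explains why they offer no argument. The structural point is that in a $2$-d tree the left (resp.\ right, bottom, top) wall of the left (resp.\ right, left, right) child lies on the \emph{same supporting line} as the corresponding wall of $v$: the split at $v$ replaces only the \emph{opposite} wall, never the named one. Now observe that the algorithm never tests whether an edge-segment meets $Q$; it tests whether the splitting \emph{hyperplane} (an infinite line) meets $Q$, and the tuple $(L,R,B,U)$ stores coordinate values recording exactly those line positions. Under this reading---``the supporting line of the named wall meets $Q$'', i.e., $x_L\in[\alpha,\beta]$ for the left wall, and so on---the observation is immediate and unconditional, because the child's wall sits on the identical line as the parent's. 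Either reading suffices for correctness, since the conjunction of all four conditions forces $cell(v)\subseteq Q$ in both interpretations; your approach simply does extra work to accommodate the stricter one.
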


To decide whether $cell(v)\subseteq Q$, we do the following. Throughout the query algorithm,  we keep a
four-tuple $(L,R,B,U)$ each being able to store one coordinate value of the given input points. Initially,
all of them are set to $NULL$. Throughout the query algorithm, this  four-tuple  maintains the following invariant:

\begin{invariant}
 When we are at a node $\Current$,   the  non-NULL or NULL value stored at $L$  (resp. $R$, $B$, and $U$)
 implies that the left (resp. right, bottom, and top)  boundary of the $cell(\Current)$  is intersecting or not intersecting
 the query region $Q$. More specifically, if the value stored at $L$ (resp. $R$, $B$, and $U$) is non-NULL\footnote{split-value 
 of some node of the ancestor of $\Current$}, then it represents the left
 (resp. right, bottom, and top) boundary of the cell corresponding to the lowest level ancestor $v$ 
 of the node $\Current$,  such that left  (resp. right, bottom, and top) boundary of $cell(v)$ intersects the query region $Q$.
\end{invariant}

At a node $v$,  if  all the entries in the four-tuple is non-NULL, then the $cell(v)\subseteq Q$.
We present our algorithm as a pseudocode in Algorithm~\ref{AlgoRCount}. This is similar to the algorithm \Expl\ in~\cite{de2013place}.
It uses two variables $\Current$ and $\State$ that 
satisfies the following:

\noindent  
\begin{itemize}
\item $\Current$ is a node in $\cal T$.   
\item $\State \in \{0,1,2\}$. 
\item If $\State = 0$, then {\it either} $cell(\Current) \subseteq Q$ {\it or} both the children 
of $\Current$ need to be processed to compute $cell(\Current) \cap Q$.
\item If $\State = 1$, then all elements of the set $Q \cap \left( \{\Current\} \cup 
{\cal T}_{{\LC(\Current)}} \right)$ have been counted, where $ {\cal T}_{{\LC(\Current)}}$ 
is the left subtree of $\Current$ in the tree $ {\cal T}$. 
\item If $\State = 2$, then all elements of the set $Q \cap  {\cal T}_{\Current}$ 
      have been counted. 
\end{itemize}

\begin{algorithm}[h]
\SetAlgoLined 
\scriptsize
\SetKwData{Counnt}{Count}
\SetKwComment{Comment}{\*}{}
\KwIn{The root $p$ of $\cal T$ and a rectangular query range $Q = [\alpha, \beta] \times [\gamma, \delta]$.} 
\KwOut{Count of all the points $q$ in $T$ that lies in $Q$.} 
$\Current = p$;    $\State = 0$; $\Count=0$; $level=0$; 4-Tuple=$(L,R,U,B)=(NULL,NULL,NULL,NULL)$\; 
    \While{$\Current \neq p$ or $\State \neq 2$}
          {\If{$\State = 0$}
              {

       			   \If{$L\neq NULL \bigwedge R\neq NULL \bigwedge U\neq NULL \bigwedge B\neq NULL $}
			{
			$\Count=\Count + SubtreeSize({\cal T}_{\Current})$ \;
            \If{($level \mod 2 =0$)} {$L=val(\Current)$\; 
                    \If{($val(\Current) = R$)} {$R=NULL$\;}}
			\If{($level \mod 2 =1$)} {$B=val(\Current)$\; 
                \If{($val(\Current) = T$)} {$T=NULL$\;}}
		 	\If{$\Current$ is the $\LC$ of its parent}{$\State = 1$;}
			 \Else{$\State = 2$;}
			 $\Current = \Par(\Current)$; $\Level=\Level-1$\;
			}
                \Else{ \If{($val(\Current)$ lies in $Q$)}{$\Count=\Count+1$\;} 
               \If{$(\Current$ has a left child $)\bigwedge$
                (full or a part of $Q$ is in the left/bottom half-space of the splitting hyperplane at $\Current$ )} 
                  { 
                  \If{the splitting hyperplane at $\Current$  intersects $Q$}
                  {{\bf if} ($level \mod 2 =0$ and $R = NULL$) {\bf then} $R=val(\Current)$\;
                  {\bf if} ($level \mod 2 =1$ and $T = NULL$) {\bf then} $T=val(\Current)$\;}
                  $\Current = {\LC(\Current)}$; $\Level=\Level+1$\;               
                   } 
               \Else{$\State = 1$;}
                  
              }
              }
           \Else{\If{$\State = 1$} 
                   { {\If{$(\Current$ has a right child) $ \bigwedge$
                 (full or part of $Q$ is in the right/top half-space of the splitting hyperplane at $\Current$ )}
                        {\If{the splitting hyperplane at $\Current$  intersects $Q$}
                  {{\bf if} ($level \mod 2 =0$ and $L = NULL$) {\bf then} $L=val(\Current)$\;
                  {\bf if} ($level \mod 2 =1$ and $B = NULL$) {\bf then} $B=val(\Current)$\;}
                        $\Current = {\RC(\Current)}$;  $\Level=\Level+1$\; 
                         $\State = 0$\;
                         }}
                         
                     \Else{$\State = 2$;  }
                    }
                 \Else{\Comment{// $\State = 2$ and $\Current \neq p$}
                       \If{($\Current$ is the $\LC$ of its parent) $\bigwedge$ (the splitting hyperplane at $\Current$  intersects $Q$)} 
                          {$\State = 1$;} 
                       {{\bf if} ($level \mod 2 =0$ and $L = NULL$) {\bf then} $L=val(\Current)$\;
                  {\bf if} ($level \mod 2 =1$ and $B = NULL$) {\bf then} $B=val(\Current)$\;}  
                          $\Current = \Par(\Current)$; $\Level=\Level-1$\;
                      }
                } 
          }
\normalsize
  
\caption{RangeCounting}
\label{AlgoRCount}
\end{algorithm}

Update of the four-tuple  $(L, R, T, B)$  is done as follows. While searching with the query
rectangle $Q$ and with $\State=0$, when $Q$ is split by the split-line of the node
and the search proceeds towards one subtree of that node, we store the split-value (corresponding to the
split-line) in the corresponding variable of the four-tuple provided
it is not set earlier (contains $NULL$ value). During the backtracking, i.e, when $\State=2$,  if the split-value
of the current node matches with the corresponding variable in the four-tuple, then the corresponding entity of the four-tuple  is set
to $NULL$. Now, if backtracking reaches from left, we set $\State=1$. Since the right child of the current node needs to be processed,
we set the corresponding entity of four-tuple with the split-value stored at that node.

The correctness of the algorithm follows from maintaining the invariants and Observation~\ref{obj:1}. 
In the worst case, we might  have visited all the nodes whose corresponding cells  overlap on the  
orthogonal query rectangle $Q$. As the number of cells stabbed by $Q$ can be shown to be 
$O(\sqrt{n})$~\cite{Berg2008}, we have the following result.

\begin{lemma}\label{count}
 Given the in-place $2$-d tree  maintained in the array $P$ of size $n$, the rectangular range
 counting query can be performed in $O(\sqrt{n})$ time using $O(1)$ extra workspace.
\end{lemma}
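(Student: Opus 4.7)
The plan is to establish three things about Algorithm~\ref{AlgoRCount}: that every auxiliary quantity it uses is computable in $O(1)$ extra workspace, that the invariant governing the four-tuple $(L,R,B,U)$ really does detect the condition $cell(\Current)\subseteq Q$, and that the set of nodes actually visited has size $O(\sqrt{n})$.

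For the first point, I would note that navigation in the implicit tree needs only the index arithmetic $\Par(i)=\lfloor i/2\rfloor$, $\LC(i)=2i$, $\RC(i)=2i+1$, and that from the current index $t$ and the maintained counter $\Level=\ell$ one recovers $r=t-(2^{\ell}-1)$ and then reads off the subtree size as $K_1^{\ell}$, $K_2^{\ell}$, or $K_3^{\ell}$ via the formulas recalled just before Observation~\ref{obj:1}. Thus the quantity $SubtreeSize({\cal T}_{\Current})$ costs only $O(1)$ time, and the only live memory is $\Current$, $\State$, $\Count$, $\Level$, and the four-tuple, each of constant size.

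The heart of the argument is the invariant that, whenever the algorithm sits at a node $\Current$, slot $L$ (resp.\ $R$, $B$, $U$) is non-$NULL$ exactly when the left (resp.\ right, bottom, top) boundary of $cell(\Current)$ intersects $Q$, and in that case it stores the split-value of the lowest-level ancestor whose cell first exhibited that intersection. I would prove this by induction on the sequence of state transitions $(\Current,\State)\mapsto(\Current',\State')$, using Observation~\ref{obj:1} as the geometric workhorse: once a boundary of an ancestor's cell meets $Q$, the same boundary is inherited by every descendant cell on the appropriate side, so a slot is overwritten only when the descent crosses a splitting hyperplane that cuts $Q$, and is cleared only when the ascent passes through the ancestor that set it. Given the invariant, the test ``all four slots non-$NULL$'' in the pseudocode is exactly equivalent to $cell(\Current)\subseteq Q$, which is the point at which the classical algorithm returns the subtree count.

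The main obstacle is bookkeeping cleanliness across backtracking: one must rule out that an ancestor's split-value gets cleared while that ancestor is still strictly above $\Current$, or conversely that a deeper split-value lingers after the search has risen above the node that wrote it. The guarded resets of the form ``if $val(\Current)=R$ then $R=NULL$'' (and the analogous ones for $L$, $B$, $U$) in Algorithm~\ref{AlgoRCount} are what make this work: they fire only at the unique ancestor whose coordinate value is currently stored, since the coordinates are drawn from distinct input points. A case split on whether $\Current$ is the left or right child of its parent, combined with Observation~\ref{obj:1}, closes the induction and therefore the correctness proof. Finally, each visited node is processed in $O(1)$ time, and nodes whose cells are disjoint from $Q$ or contained in $Q$ are either skipped or collapsed into a single subtree-size addition; the visited nodes are thus those whose cells are stabbed by the boundary of $Q$, whose number is $O(\sqrt{n})$ by the classical planar $k$-d tree bound~\cite{Berg2008}. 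This yields the $O(\sqrt{n})$ query time asserted in Lemma~\ref{count}.
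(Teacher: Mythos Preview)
Your proposal is correct and follows essentially the same approach as the paper: correctness via the four-tuple invariant together with Observation~\ref{obj:1}, and the $O(\sqrt{n})$ running-time bound via the classical count of cells stabbed by the boundary of $Q$~\cite{Berg2008}. Your write-up is considerably more detailed than the paper's own proof, which simply asserts that correctness follows from the invariants and Observation~\ref{obj:1} and then cites the stabbing bound; in particular, your explicit treatment of how the guarded resets maintain the invariant during backtracking and your distinction between cells contained in $Q$ (collapsed to a subtree count) versus cells stabbed by $\partial Q$ (actually visited) are refinements the paper leaves implicit.
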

We can generalize, the above algorithm for points in  $\IR^k$. The only difference is that we need $2k$-tuple instead of four-tuple. 
Assuming $k$ is a fixed constant, we have the following:
\begin{lemma}\label{count2}
Given the in-place k-d tree  maintained in the array $P$ of size $n$, the orthogonal range
counting query can be performed in $O(n^{1-1/k})$ time using $O(1)$ extra workspace.
\end{lemma}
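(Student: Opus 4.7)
The plan is to lift the 2-dimensional algorithm of Lemma~\ref{count} to $\IR^k$ by replacing the four-tuple $(L,R,T,B)$ with a $2k$-tuple that records, for each of the $2k$ boundary facets of the current cell, whether that facet is stabbed by the query box $Q$. For each coordinate direction $j \in \{1,\dots,k\}$ I would keep two slots, one for the ``lower'' facet and one for the ``upper'' facet of $cell(\Current)$ in dimension $j$; as in the planar case, a slot holds either NULL (meaning that facet is disjoint from $Q$) or the splitting coordinate stored at the lowest ancestor whose cell has that facet intersecting $Q$.

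First I would state the natural generalization of Observation~\ref{obj:1}: if the lower (resp.\ upper) facet of $cell(v)$ along dimension $j$ intersects $Q$, then the lower (resp.\ upper) facet along dimension $j$ of the cell of whichever child $v'$ still shares that facet with $v$ also intersects $Q$. This is a direct consequence of the fact that the splitting hyperplane at $v$ is perpendicular to a single coordinate axis and at most changes one of the $2k$ facets of the cell when we descend. This lets the $2k$-tuple be updated locally, using only the splitting coordinate at $\Current$, the level counter $\Level$, and the index $\Level \bmod k$ that identifies the current splitting dimension, exactly as the parity check ``$\Level \bmod 2$'' is used in Algorithm~\ref{AlgoRCount}.

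Next I would transport the non-recursive traversal of Algorithm~\ref{AlgoRCount} verbatim, only replacing the two parity cases by a $k$-way case split on $\Level \bmod k$, and replacing the containment test ``all four slots non-NULL'' by ``all $2k$ slots non-NULL''. Whenever all $2k$ facets are already stabbed, the cell is fully contained in $Q$ and we add $SubtreeSize({\cal T}_{\Current})$ to the counter. The subtree size is computed on the fly from $\Level$, $t$, $n$, and $\varkappa$ using the formulas $k_\ell=\lfloor (n-\varkappa)/2^{h-\ell}\rfloor$ and $K_1^\ell,K_2^\ell,K_3^\ell$ exactly as in the planar case, so no additional storage is required. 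The state variable $\State\in\{0,1,2\}$ and its transitions carry over unchanged because the tree is still binary; only the cycling of splitting dimensions and the width of the facet tuple grow with $k$. The invariant on the $2k$-tuple is preserved because every push/pop of a splitting coordinate into/out of the tuple happens exactly when the search descends into, or backtracks out of, the unique ancestor that first truncated that facet against~$Q$.

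The correctness follows from the generalized Observation and the invariant on the $2k$-tuple, mirroring the planar proof. For the time bound, the only nontrivial ingredient is the classical fact \cite{Berg2008} that in a $k$-d tree on $n$ points the number of cells whose interior is stabbed by an axis-parallel box is $O(n^{1-1/k})$; since our traversal visits each such cell at most a constant number of times and does $O(k)=O(1)$ work per visit, the total running time is $O(n^{1-1/k})$, and the extra workspace is $O(k)=O(1)$. The main obstacle I anticipate is the bookkeeping needed to argue that the $2k$-tuple invariant truly holds after every descent, ascent, and state transition, because unlike the planar case several facets in different dimensions may become stabbed or unstabbed across a single level; handling this cleanly requires being careful that each facet is only written when the corresponding tuple slot is NULL and only cleared when the current splitting coordinate equals the stored value, so that writes and erasures pair up along the search path.
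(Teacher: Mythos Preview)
Your proposal is correct and follows exactly the approach the paper itself takes: the paper's entire argument for Lemma~\ref{count2} is the single sentence preceding it, namely that one replaces the four-tuple by a $2k$-tuple and appeals to the classical $O(n^{1-1/k})$ stabbing bound, with $k$ a fixed constant so that the tuple occupies $O(1)$ space. Your write-up simply fleshes out that sentence in detail; the anticipated ``obstacle'' about several facets changing across a single level is in fact a non-issue, since at each node only the splitting dimension $\Level \bmod k$ is affected, exactly as in the planar case.
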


\section{$LMR$ problem in arbitrary orientation}\label{twod}

In this section, we describe the method of identifying an arbitrarily 
oriented red rectangle of largest size for a given bichromatic
point set $P=P_r\cup P_b$ in $\IR^2$. The $LRR$ problem was solved by Bandyapadhyay and Banik \cite{bandyapadhyay2017polynomial},
considering the blue points as obstacles, using the following observation:

\begin{observation} \label{obj1} \cite{bandyapadhyay2017polynomial}
At least one side of a $LRR$ must contain two points $p,q$ such that $p \in P_b$ and
$q\in P_r\cup P_b$, and other three sides either contain at least one point of $P_b$, 
or is open (unbounded) (see Figure \ref{obs2ex}). 
\end{observation}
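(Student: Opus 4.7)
The plan is a perturbation argument in the $5$-dimensional parameter space of arbitrarily oriented rectangles in the plane (parameterized, for instance, by the orientation $\theta$ and by the four signed distances from the origin to the four side-lines). Each point lying on a side of the rectangle constitutes one equation; each open (unbounded) side corresponds to one of the side-lines being sent to infinity; together, incidences and open sides act as constraints on the $5$ parameters, and I want to show that at least $5$ such constraints are forced, with one side carrying two of them.

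First I would show that every bounded side of the $LRR$ $R$ must carry at least one blue point. Suppose for contradiction that some bounded side $s$ carries no blue point. Translate $s$ outward, keeping the other three sides fixed. For a sufficiently small displacement $\eps>0$, no new blue point enters the interior (since $s$ has positive distance to the nearest blue point outside $R$), so the enlarged rectangle is still red and contains at least as many red points as $R$. Keep sliding $s$ outward until one of the following happens: (i) a blue point lands on $s$, (ii) $s$ escapes to infinity and becomes open, or (iii) a red point enters the interior before (i) or (ii), which would produce a red rectangle with strictly more red points and contradict that $R$ is an $LRR$. In cases (i) and (ii) the desired property is achieved on $s$.

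For the second part, suppose after the above pinning that no side contains two incident points. Let $o$ be the number of open sides; then the total number of active constraints is exactly $(4-o)+o=4$, leaving one degree of freedom. Move $R$ along this one-parameter family while maintaining the $4$ constraints. As the parameter varies, the possible events are: a new point lands on a side, or a point enters/leaves the interior. A red point entering strictly increases the red count and contradicts maximality; a blue point entering must first touch a side, which is itself a new-incidence event; a red point leaving only decreases the count, which is consistent. Hence we can follow a direction along which the red count is non-decreasing until a new-incidence event occurs. The new point lies on some side $s'$. If $s'$ was already a bounded, pinned side, then $s'$ carried one blue point from the first step and now carries a second point, giving the desired pair $(p,q)$ with $p\in P_b$, $q\in P_r\cup P_b$. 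If $s'$ was open, then the event is that side becoming bounded via a point landing on it; we then restart Step~1 on $s'$ and repeat.

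The main obstacle is making this second-step perturbation rigorous: I must verify that the one-parameter family is non-degenerate, that a non-decreasing direction exists, and that the parameter cannot escape to infinity without triggering an incidence event. The subtle case is a \emph{corner event}, where one of the four blue incidences slides off the end of its side during the motion; then we lose a constraint and temporarily regain a degree of freedom, and must reapply the pinning step to the newly-unpinned side. Termination of this alternating process follows from the fact that the red count is bounded above (by $n$) and non-decreasing across pinnings, while the number of (point, side) incidence pairs lives in a finite set, so the iteration cannot cycle.
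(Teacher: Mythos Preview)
The paper does not prove this observation; it is quoted with a citation to \cite{bandyapadhyay2017polynomial} and used as a black box, so there is no in-paper argument to compare against. Your perturbation approach is the standard one and is correct in outline, but your Step~2 is more tangled than it needs to be and contains two confusions.

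First, the case ``if $s'$ was open'' is vacuous: an open side is at infinity, so no point ever lands on it during the rotation. Every new incidence therefore occurs on a bounded side, which by Step~1 already carries a blue pin, and you are done immediately---no restart. Second, a corner event is terminal, not a complication: when the pin $b_i$ of side $s_i$ slides to the corner $s_i\cap s_j$, the adjacent bounded side $s_j$ at that instant carries both its own blue pin $b_j$ and $b_i$, giving two blue points on $s_j$. You never ``lose a constraint''; you stop. Consequently the whole termination discussion (alternating re-pinning, finiteness of incidence pairs) can be dropped.

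For the existence of a first event, note that after Step~1 every bounded side is pinned at a fixed blue point, so as the orientation varies over an interval of length $\pi/2$ the roles of the four sides rotate cyclically; by continuity some pin must reach a corner within that quarter-turn. Hence a first event always occurs. That event cannot be a red point entering from outside (this would raise the count above the maximum, contradicting that $R$ is an $LRR$), so it is one of: a red point reaching a side from inside, a blue point reaching a side from outside, or a corner event. In all three cases the affected bounded side carries its blue pin plus a second point at the event instant, the interior is still blue-free, and the red count is unchanged (boundary points count toward size). No choice of a ``non-decreasing direction'' is needed.
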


\begin{figure}[h]
 \centering{\includegraphics[scale=0.5]{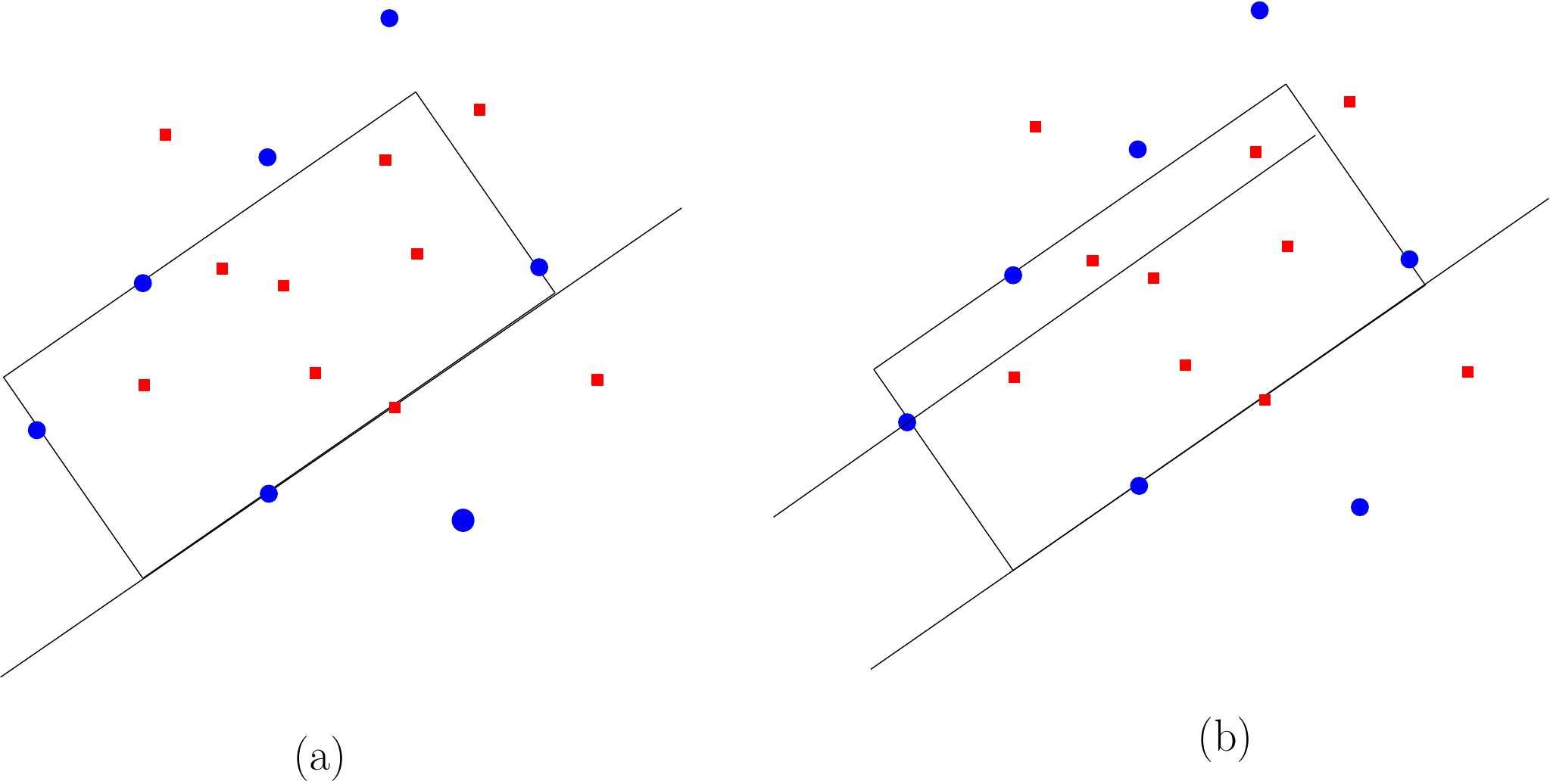}} 
 \caption{Example of $LRR$}
 \label{obs2ex}
\end{figure}

For the sake of formulation of our problem, let us have a {\it general position assumption} 
that no three points are collinear. We will use $\cal A$ to denote the convex hull of the point set $P$.
\begin{definition}
A pair of points $(p,q)$ is said to be a {\em candidate pair} if $p \in P_b$ and $q \in P_r\cup P_b$.
\end{definition}
\begin{definition}
A rectangle with one of its boundaries defined by a candidate pair, and each of the 
other three boundaries containing at least one point in $P_b$ is referred to as a {\em candidate $LRR$},
or $cLRR$ in short.
\end{definition}

We consider each candidate pair $(p,q)$, and define a line $\ell_{pq}$ passing through $p$ 
and $q$. We process each side of $\ell_{pq}$ separately to compute all the $cLRR$s' 
with $(p,q)$ on one of its boundaries by sweeping a line parallel to $\ell_{pq}$ 
among the points in $P$ in that side of $\ell_{pq}$, as stated below. After
considering all the candidate pairs in $P$, we report the $LRR$. We describe  
the method of processing the points in $P$ above\footnote{A point $(\alpha,\beta)$ 
is said to be above the line $ax+by+c = 0$ if $a\alpha + b\beta +c > 0$; otherwise 
the point $(\alpha,\beta)$ is below the said line.} $\ell_{pq}$. A similar method 
works for processing the points in $P$ below $\ell_{pq}$. 

\subsection{Processing a candidate pair $(p, q)$} \label{propq1}

Without loss of generality, we consider $\ell_{pq}$ as the $x$-axis, and $x(p) 
< x(q)$. Let $P'$ be the array containing the subset of $P$ lying above the $x$-axis. 
Let $P_b'$ and $P_r'$ denote the blue and red point set respectively in $P'$, $m'=
|P_b'|$ and $n'=|P_r'|$. We sort the points of $P_b'$ with respect to their 
$y$-coordinates, and construct a range tree $\cal T$ with the red points in $P_r'$ considering $\ell_{pq}$ as the $x$-axis.
 
Observe that each $cLRR$ above $\ell_{pq}$ with $(p,q)$ on its one side corresponds 
to a maximal empty rectangle ($MER$) \cite{cccgDeN11} among the points in $P_b'$ whose bottom 
side is aligned with the $x$-axis and containing $(p,q)$. We sweep a horizontal line 
$H$ in a bottom-up manner to identify all these $cLRR$s'.  

During the sweep, we maintain an interval ${\cal I}=[\alpha,\beta]$. $\cal I$ is initialized 
by $[x_{min},x_{max}]$ at the beginning of the sweep, where $x_{min}$ and $x_{max}$ 
are the points of intersection of the line $\ell_{pq}$ (the $x$-axis) with the 
boundary of $\cal A$. For each point $\theta \in P_b'$ 
encountered by the sweep line, if $x(\theta) \not\in {\cal I}$, sweep proceeds to process 
the next point.  Otherwise, we have a $cLRR$ with horizontal span $[\alpha,\beta]$, and the top 
boundary containing $\theta$\footnote{Needless to say, its bottom boundary contains 
the points $(p,q)$.}. Its size is determined in $O(\log n')$ time by performing a rectangular
range counting query in $\cal T$. Now, 
\vspace{-0.1in}
\begin {itemize}
 \item if $x(\theta) \in [x(p), x(q)]$ then the sweep stops.
 \vspace{-0.05in}
 \item otherwise, 
 \vspace{-0.1in}
 \begin{itemize}
\item if $\alpha \leq x(\theta) \leq x(p)$ then $\alpha =x(\theta)$ is set,
\vspace{-0.05in}
\item if $x(q) \leq x(\theta) \leq \beta$ then $\beta= x(\theta)$ is set, 
\end{itemize}
\end {itemize}
and the sweep continues. Finally, after considering all the points in $P_b$, the sweep
stops. For a detailed description of our proposed method, see Algorithm \ref{rangeT}.
A similar method is adopted for the points below $\ell_{pq}$.

\begin{algorithm}
 \SetAlgoLined
 \SetKwData{I}{I}\SetKwData{size}{size}\SetKwData{clrr}{cLRR}\SetKwData{Stop}{Stop}\SetKwData{sizec}{size(cLRR)}

\small 
 \KwIn{An array $P=P_b \cup P_r$ of points above $\ell_{pq}$; $P_b$ is $y$-sorted blue points, and 
$P_r$ corresponds to the range tree $\cal T$ for the red points.}
\tcc{$\ell_{pq}$ is the line through the candidate pair $(p,q)$}
\KwOut{$LRR$ in $P$}

$\alpha \leftarrow x_{min}$
\tcc*[r]{$x_{min}$ is left-intersection point of the line $\ell_{pq}$  with boundary of $\cal A$}
$\beta \leftarrow x_{max}$
\tcc*[r]{$x_{max}$ is right-intersection point of the line $\ell_{pq}$ with boundary of $\cal A$}
\I $\leftarrow [\alpha,\beta]$\;
\size $\leftarrow 0$
\tcc*[r]{number of red points in a rectangular range}
\sizec $\leftarrow 0$
\tcc*[r]{\size of optimum \clrr}

\For(\tcc*[f]{$H$ is the sweepline}){each point $\theta=(x_\theta,y_\theta) \in P_b$ encountered by $H$ in order} { 
\If{$x_\theta \in \text{\I}$}{
define a $cLRR$ with its bottom boundary by the candidate pair $(p,q)$, top boundary at 
$\theta$, left and right boundaries at $\alpha$ and $\beta$ respectively\;
determine \size of \clrr \tcc*[r]{using rectangular range query in $\cal T$}
\If{\size $>$ \sizec}{
\sizec $\leftarrow$ \size\;
}
\If{$x_\theta \in [x_p,x_q]$}{
\Stop\;
}
\If{$\alpha \leq x_\theta \leq x_p$}{
$\alpha \leftarrow x_\theta$\;}
\If{$x_q \leq x_\theta \leq \beta$}{
$\beta \leftarrow x_\theta$\;
}
}
}
return \sizec\;
\caption{\LPA} 
\label{rangeT}
\end{algorithm}

\begin{lemma} \label{l}
The above algorithm computes the $LRR$ in  $O(m(m+n)(m\log n + m\log m +n\log n))$
time using $O(n\log n)$ extra space. 
\end{lemma}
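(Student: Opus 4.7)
The plan is to bound the work per candidate pair and then multiply by the number of candidate pairs. Observation~\ref{obj1} guarantees that every $LRR$ is a $cLRR$ for some candidate pair $(p,q)$ with $p\in P_b$ and $q\in P_r\cup P_b$, so it suffices to maximize the size over the $O(m(m+n))$ candidate pairs. For each fixed pair I would apply a rigid motion sending $\ell_{pq}$ to the $x$-axis, handle the half-plane above the line (the half below being symmetric), and then sum the cost.

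The per-pair work splits into preprocessing and sweep. The preprocessing consists of (i) extracting the set $P_b'$ of blue points above $\ell_{pq}$ and sorting it by $y$-coordinate in $O(m\log m)$ time, and (ii) building a two-dimensional range tree $\mathcal{T}$ on the red subset $P_r'$ that supports rectangular range counting queries in $O(\log n)$ time per query, at a cost of $O(n\log n)$ time and $O(n\log n)$ extra space. For the sweep, the invariant I would establish is that $[\alpha,\beta]$ equals the horizontal extent of the widest blue-free strip above $\ell_{pq}$ that contains the segment $\overline{pq}$ on its bottom and has its top at the current sweep height. Each blue point $\theta$ with $x(\theta)\in[\alpha,\beta]$ yields one $cLRR$ whose size is obtained by a single $O(\log n)$-time query on $\mathcal{T}$; the interval is then either shrunk (if $\theta$ lies outside $[x(p),x(q)]$ horizontally) or the sweep is terminated (if $\theta$ lies over $\overline{pq}$), because in the latter case every taller candidate with this pair on its bottom would contain $\theta$. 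At most $|P_b'|\le m$ queries occur, so the sweep contributes $O(m\log n)$ per pair.

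Adding up gives a per-pair cost of $O(m\log m+n\log n+m\log n)$; multiplying by $O(m(m+n))$ candidate pairs yields the stated time bound $O(m(m+n)(m\log n+m\log m+n\log n))$. Since the sorted blue array and the range tree can be discarded and rebuilt from scratch for each pair, the extra space is dominated by the $O(n\log n)$ used by $\mathcal{T}$. The part that needs the most care is the correctness of the sweep invariant: one must verify that shrinking $[\alpha,\beta]$ on only one side of $\theta$ really captures all blue-free rectangles compatible with the new sweep height, and that the initialization $[\alpha,\beta]=[x_{\min},x_{\max}]$ correctly handles the case in Observation~\ref{obj1} where a left or right side is open, i.e.\ touches the convex hull $\mathcal{A}$ rather than a blue point. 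Everything else is routine time/space accounting.
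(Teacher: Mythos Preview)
Your argument is correct and mirrors the paper's own proof essentially line for line: the paper likewise charges $O(m'\log m' + n'\log n')$ for per-pair preprocessing (sorting $P_b'$ and building the range tree on $P_r'$), $O(m'\log n')$ for the at most $m'$ counting queries during the sweep, and then multiplies by the $O(m(m+n))$ candidate pairs, with the $O(n\log n)$ space coming from the range tree. If anything, your discussion of the sweep invariant and the handling of open sides via the initialization $[\alpha,\beta]=[x_{\min},x_{\max}]$ is more explicit than the paper's version.
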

\begin{proof}
The space complexity follows from the space needed for maintaining the range tree $\cal T$ . We
now analyze the time complexity. For each candidate pair $(p,q)$, (i) the 
preprocessing steps sorting of the points in $P_b'$, and constructing $\cal T$ with the points in $P_r'$) need 
$O(n'\log n' + m'\log m')$ time, and (ii) during the sweep, reporting the size of each $cLRR$ needs 
$O(\log n')$ time \footnote{the time for the counting query for a rectangle in a range tree using 
fractional cascading.}. Since, $O(m')$ $cLRR$ may be reported for the candidate pair $(p, q)$, the 
total processing time for $(p, q)$ is $O(m'\log n' + m'\log m'+ n'\log n')$ in the worst case. The 
result follows from the fact that we have considered $O(m(n + m))$ candidate pairs, $m'= O(m)$ and 
$n'= O(n)$ in the worst case.
\end{proof}
The same method is followed to compute the $LBR$. Finally, $LMR$ is reported by comparing the size of 
$LRR$ and $LBR$. Lemma \ref{l} says that both the time and space complexities of our proposed algorithm 
for computing the $LMR$ are an improvement over those of the algorithm of 
Bandyapadhyay and Banik \cite{bandyapadhyay2017polynomial} for the same problem.  

It needs to be mentioned that, we can implement the algorithm for the $LRR$ problem in an in-place
manner by replacing range tree with the in-place implementation of 2-d tree as described in
Section \ref{in-place} for the range counting. Thus, the preprocessed data structure (the sorted
array of $P_b$ and the 2-d tree for $P_r$) can be stored in the input array $P$ without any extra
space. Using the results in Lemmata \ref{preprocess} and \ref{count}, we have the following result. 

\begin{theorem} \label{lrr}
In the in-place setup, one can compute an $LMR$ in $O(m(m+n)(m\sqrt{n}+m\log m+n\log n))$
time using $O(1)$ extra space. 
\end{theorem}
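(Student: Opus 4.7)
The plan is to reuse the algorithmic skeleton of Lemma \ref{l} verbatim (iterate over the $O(m(n+m))$ candidate pairs $(p,q)$, and for each run a sweep-line on the points above/below $\ell_{pq}$) and only replace the auxiliary data structures with their in-place counterparts. Concretely, for each candidate pair $(p,q)$ I would (i) partition the input array $P$ in place so that the points strictly above $\ell_{pq}$ occupy a prefix and the points strictly below occupy a suffix; each such side is then re-partitioned by color, giving four contiguous subarrays $P_b^{\uparrow}, P_r^{\uparrow}, P_b^{\downarrow}, P_r^{\downarrow}$. Each of these rearrangements is a two-pointer Hoare-style partition and runs in $O(n+m)$ time with $O(1)$ extra workspace.

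On each side (say the ``above'' side), I would then sort the $m'\le m$ blue points by $y$-coordinate relative to $\ell_{pq}$ using an in-place $O(m'\log m')$ comparison sort (e.g.\ heapsort), and construct the in-place $2$-d tree over the $n'\le n$ red points using Lemma \ref{preprocess} in $O(n'\log n')$ time and $O(1)$ extra space. The sweep from Algorithm \ref{rangeT} is then run exactly as before, except that the size of the $cLRR$ defined by the current sweep position is obtained by an orthogonal range counting query on the in-place $2$-d tree, which costs $O(\sqrt{n'})$ per query by Lemma \ref{count} instead of the $O(\log n')$ enjoyed by a fractional-cascading range tree. Since at most $O(m')$ such queries are issued per candidate pair, the per-pair cost becomes $O(m'\sqrt{n'}+m'\log m'+n'\log n')$, and multiplying by the $O(m(n+m))$ candidate pairs yields the claimed bound $O(m(m+n)(m\sqrt{n}+m\log m+n\log n))$. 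Computing the $LBR$ symmetrically and taking the larger of the two gives the $LMR$.

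The main obstacle, and the only substantive in-place bookkeeping issue, is that each candidate pair destroys the ordering used by the previous one: the partition by $\ell_{pq}$, the $y$-sort of the blues, and the $2$-d tree layout of the reds all reshuffle the array. I would handle this by observing that none of these operations needs to be undone, because the next candidate pair starts from scratch with the same set $P$ in whatever order it currently sits; all we need between iterations is that the array still contains exactly the multiset $P$, which is preserved by every step (partition, in-place sort, and the block-median rearrangements of Section \ref{preprocessing2d}). Finally, since each of the four components (partition, in-place sort, in-place $2$-d tree construction, non-recursive range-counting query) uses only $O(1)$ extra workspace, the entire algorithm adheres to the in-place model, completing the proof.
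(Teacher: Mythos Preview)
Your proposal is correct and follows exactly the approach the paper takes: replace the range tree of Lemma~\ref{l} by the in-place $2$-d tree of Section~\ref{in-place}, so that each counting query costs $O(\sqrt{n'})$ rather than $O(\log n')$, and observe that the sorted blue array and the $2$-d tree over the reds fit in the input array with $O(1)$ extra workspace. The paper states this in a single sentence and appeals to Lemmata~\ref{preprocess} and~\ref{count}; you have simply spelled out the in-place bookkeeping (Hoare-style partitioning by side and color, in-place heapsort for the blues, and the fact that no state needs to persist between candidate pairs) that the paper leaves implicit.
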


\section{$LWR$ problem in arbitrary orientation}

In this section, we consider a weighted variation of {\bf P1}. Here each point in $P_r$ 
is associated with a non-zero positive weight and each point in $P_b$ is associated with a 
non-zero negative weight. Our goal is to report a rectangle $LWR$ of arbitrary orientation 
such that the sum of weights of the points inside that rectangle (including its boundary) is 
maximum among all possible rectangles in that region. Unlike problem {\bf P1}, here the 
optimum rectangle may contain points of both the colors.

\begin{observation}
At least one side of the $LWR$ must contain two points $p,q\in P_r$, and other three sides 
either contain a point of $P_r$ or is open. A point $p\in P_r$
may appear at a corner of the solution rectangle $LWR$. In that case, $p$ is considered to 
be present in both the adjacent sides of $LWR$.
\end{observation}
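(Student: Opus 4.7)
The plan is to combine two local-optimality moves---boundary shrinkage and a one-parameter rotational deformation---adapted from the emptiness argument in Observation~\ref{obj1} to the signed-weight setting.

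First, I would show that every bounded side of an $LWR$ may be assumed to contain a point of $P_r$. Fix an optimal $R = LWR$ and suppose some bounded side $S$ carries no red point. Slide $S$ inward by a sufficiently small $\varepsilon > 0$: since no red point sits on $S$, none is deleted, so the red contribution does not decrease, while any blue points in the swept slab are removed and, having negative weight, strictly improve the total. Continue the slide until the first combinatorial event: either a red point newly touches $S$---making $S$ tight against a point of $P_r$---or $R$ degenerates to zero area. The latter case gives total weight~$0$, which contradicts optimality whenever a positive-weight rectangle exists; if the optimum weight is non-positive, the observation is vacuous since any degenerate rectangle meets it. Applying this shrinkage to each of the four bounded sides produces an optimal rectangle with a red point on every bounded side.

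Next, I would argue that at least one side carries \emph{two} red points by a degree-of-freedom argument. A planar rectangle is specified by five parameters (centre $(c_x,c_y)$, two side lengths, and orientation $\theta$), and a point pinned to a prescribed side imposes one linear equation. If the configuration obtained above had exactly one red point per side, only four constraints would be active, leaving a smooth one-parameter family $\{R(t)\}$ of rectangles preserving all four contacts. The total weight is piecewise constant in $t$, so on at least one side of $t=0$ it is non-decreasing; propagate $R(t)$ in that direction until the next combinatorial event. By optimality of $R$, the event cannot be a blue point entering or a red point exiting (weight would drop), nor can it be a red point entering or a blue point exiting (weight would strictly rise, contradicting optimality of $R$). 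The only remaining possibility is that an additional red point arrives on a side of $R(t)$ with weight unchanged, giving an $LWR$ with two red points on a common side. A red point occurring at a corner of $R$ touches two sides simultaneously and contributes two independent constraints in this count, so it is legitimately credited to both adjacent sides, matching the corner clause of the statement.

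The principal subtlety I anticipate is justifying the one-parameter family $\{R(t)\}$: one must verify that four ``one point per side'' contacts generically admit a smooth rotational deformation---essentially rotating about a pivot determined by the four tangency equations---and handle non-generic degeneracies in which the sweep terminates at the feasibility boundary before a fifth contact appears. These cases can be resolved by an explicit parameterisation in the rotation angle $\theta$, expressing each side's position as a function of $\theta$ via the four contact equations and identifying the smallest $\theta$-perturbation at which a fifth point of $P$ becomes incident to $\partial R$.
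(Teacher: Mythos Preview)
The paper does not prove this observation; it is stated without justification, by analogy with Observation~1 (which is attributed to~\cite{bandyapadhyay2017polynomial}), and the algorithm begins immediately afterward. So there is no paper proof to compare against, and your shrink-then-rotate strategy is the natural approach and goes beyond what the paper offers.

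That said, your rotation step has a genuine gap. You assert that along the one-parameter family $R(t)$ the first event ``cannot be a blue point entering or a red point exiting (weight would drop),'' but a weight drop at the event does \emph{not} contradict optimality of $R(0)$: optimality only forbids the weight from \emph{rising}. Concretely, suppose in the direction $t>0$ the first event at $t_+$ is a blue point $b$ arriving on a side from outside. For $t\in[0,t_+)$ the weight equals $W^*$; at $t=t_+$ the point $b$ lies on the boundary and, by the paper's ``inside or on the boundary'' convention, is counted, so the weight is $W^*+w(b)<W^*$. No fifth red contact has appeared and the deformation simply halts short of the goal. If the same occurs for $t<0$, the whole open interval $(t_-,t_+)$ consists of optimal rectangles each with exactly one red point per side, and your argument extracts no LWR with two reds on a common side. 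You also conflate ``red exiting'' with a weight drop: a red point moving from the interior onto a side is still counted at that instant, so the weight is unchanged---this is precisely the favourable case you want, not a forbidden one. The muddled case split hides that the only genuinely problematic event is a blue point entering from outside, and that case is not dispatched.

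Closing the gap would require an additional argument: for instance, showing that whenever the rotational family is blocked at both ends by exterior blue points, some \emph{other} optimal rectangle (at a different orientation, or with a different assignment of pinned red points to sides, or with a pinned red point reaching a corner) necessarily exists and carries the required double red contact. This is plausible but is not established by your sketch, and the ``explicit parameterisation in $\theta$'' you propose in the final paragraph addresses smoothness of the family, not this obstruction.
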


We will consider all possible pairs of points $(p,q)\in P_r$ and define a line $\ell_{pq}$ 
joining $p,q$. We process each side of $\ell_{pq}$ separately to compute all the candidate 
$LWR$, denoted as $cLWR$, among the points in $P$ lying in that side of $\ell_{pq}$. After 
considering all possible pairs of points, we report $LWR$. We now describe the processing
of the set of points $P' \in P$ that lies above $\ell_{pq}$.

\subsection{Processing a point-pair $(p, q)$}
As earlier, assume $\ell_{pq}$ to be the $x$-axis. Consider a rectangle $R$ 
whose bottom side aligned with $\ell_{pq}$ (see Figure \ref{fig4}); the 
top side passing through $p_\theta$, left and right sides are passing through 
$p_b$ and $p_c$ respectively. We can measure the weight of the rectangle $R$ 
as follows:
 
 \begin{figure}[h]
\centerline{\includegraphics[scale=0.5]{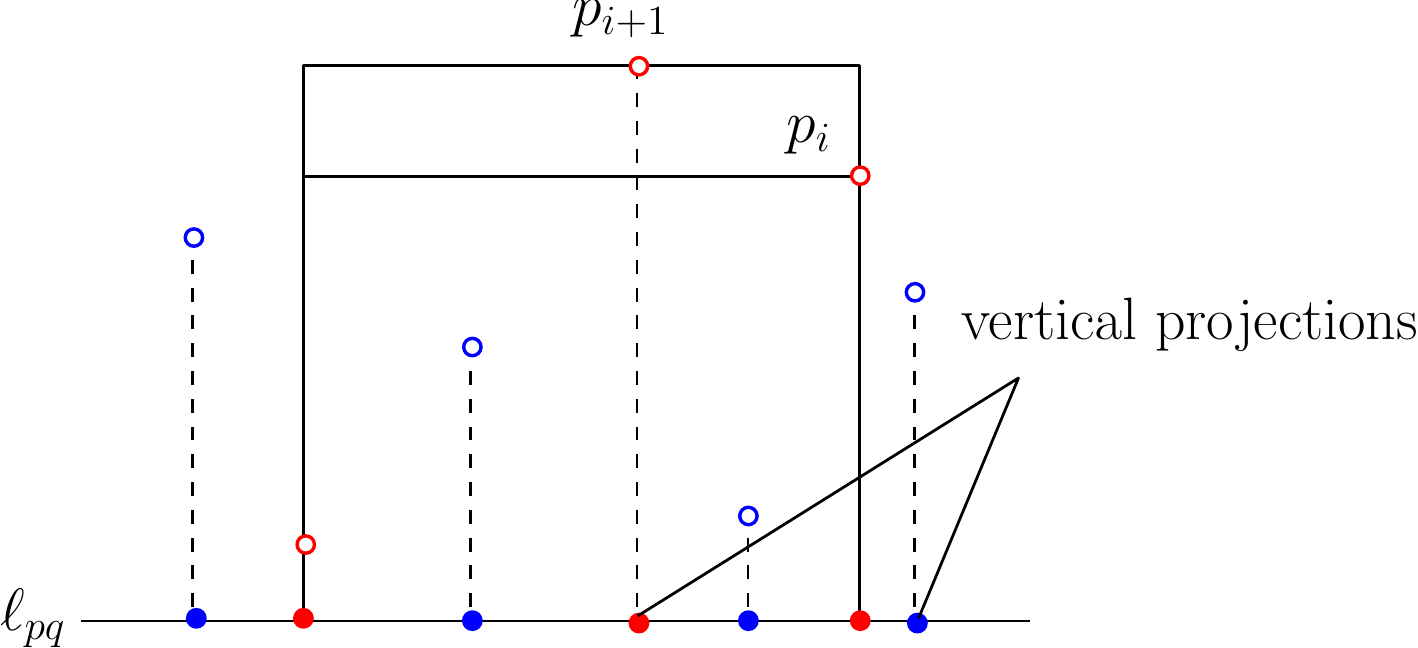}}
\caption{Update of $LWR$}
\label{fig4}
\end{figure}
 
 \begin{description}
  \item Let $U = \{u_i, i = 1, 2, \ldots,n\}$ be the projection of all the
   points on $\ell_{pq}$ having $y$-coordinate (distance from $\ell_{pq}$) 
   less than or equal to that of $p_\theta$. Each member $u_i$ is assigned 
   an weight equal to the weight of its corresponding point $p_i$. Now, 
   compute the cumulative sum of weights $W(u_i)$ at each projected point 
   of $U$ from left to right. Observe that the weight of the rectangle $R$ 
   is equal to $W(c)-W(\alpha)$, where
    $u_\alpha$ is the rightmost point in $U$ to the left of $p_b$.
 \end{description}

Thus, in order to get a maximum weight rectangle with its top boundary 
passing through the point $p_\theta$ and having $(p, q)$ on its bottom 
boundary, we need to search for an element in $u_{\alpha}\in U$ having 
$x$-coordinate less that $\min(x(p_\theta),x(p))$ having minimum weight, 
and an element $u_{\beta}\in U$ having $x$-coordinate greater than 
$\max(x(p_\theta),x(q))$ having maximum weight. \blue{The weight of the 
rectangle with $(p,q),p_\alpha, p_\theta, p_\beta$ on its bottom, left, 
top and right boundaries will be $W(u_{\beta})-W(u_{\alpha})$.} 

We sweep a horizontal line (see Figure \ref{fig4}) among the points 
in $P'$. During the sweep, we create a projection $u_i$ of each point 
$p_i\in P'$ and assign its weight $w(u_i) = \displaystyle w(p_i)$, 
and store them in a dynamically maintained weight balanced leaf search 
binary tree $\cal T$ \cite{overmars1983}. Its leaves correspond to the 
projections of all points that are faced by the sweep line (see Figure 
\ref{fig5}). Each internal node $u$ in $\cal T$ maintains three pieces 
of information, namely $EXCESS$, $MAX$ and $MIN$. $MAX$ and $MIN$ store 
the maximum and minimum of $W(u_i)$ values stored in the subtree rooted 
at the node $u$ of $\cal T$. The $EXCESS$ field is initialized with 
``zero''. Each projected point $u_j$ at the leaf also stores the 
cumulative sum of weights $W(u_j)$. During the sweep, when a new point 
$p_i \in P'$  is faced by the sweep line, $u_i$ is inserted in $\cal T$. 
Now, for all $u_j$ with $x(u_j) > x(u_i)$, the cumulative sum of weights 
needs to be updated as $\hat{W}(u_j)=W(u_j)+w(u_i)$. We use $EXCESS$ 
field to defer this update as follows. 

\begin{description}
 \item While tracing the search path to insert $u_i$ (= $x(p_i)$) in $\cal T$, 
 if the search goes from a node $v$ to 
 its left child, then we add $w(u_i)$ with the $EXCESS$ field of 
 the right child $z$ of $v$.
This is in anticipation that while processing another point $u_j \in P'$ if the 
search goes through $z$, then the $EXCESS$ field of $z$ will be 
propagated to the $EXCESS$ field of 
its two children (setting the $EXCESS$ field of $z$ to $0$).
\end{description}

\begin{figure}[h]
\centering
\includegraphics[scale=0.55]{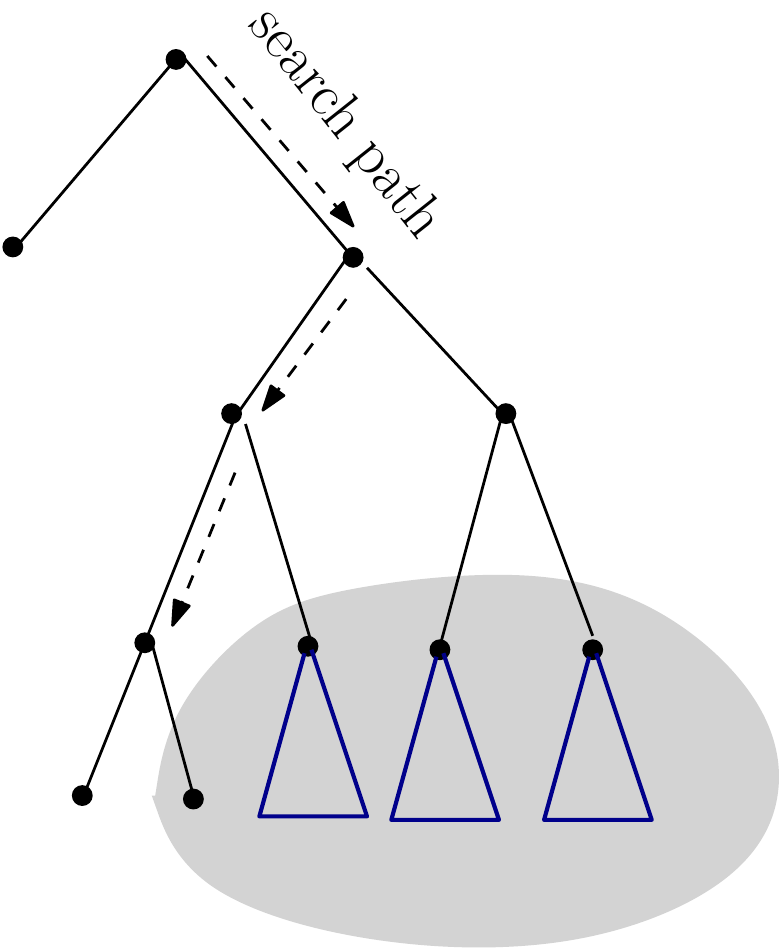}
\caption{Search Path in $\cal T$}
\label{fig5}
\end{figure}

After the insertion of $u_i$ in $\cal T$, we trace back up to the root of 
$\cal T$ and update the $MAX$ and $MIN$ fields (if necessary) of each node 
on the search path. If the (weight-)balance condition at any node is 
violated, a time linear in the size of the subtree rooted at that node is 
spent to rebuild that subtree in the (weight-)balanced manner. 

Now, if $p_i \in P_r$, then we find the $cLWR$ of maximum 
weight with $(p,q)$ on its bottom boundary and $p_i$ on its top boundary 
by identifying (i) a element $u_\alpha \in {\cal T}$ with $W(u_\alpha) 
= min\{W(u)|x(u) < min(x(p), x(p_i))\}$ using the $MIN$ fields of the 
nodes on the search path, and (ii) a point $u_\beta\in {\cal T}$ with 
$W(u_\beta) = max\{W(u)|x(u) > max(x(q),x(p_i))\}$ using the $MAX$ fields 
of the nodes on the search path. As mentioned earlier, the weight of the 
rectangle on $\ell_{pq}$ with $p_\alpha,p_i,p_\beta \in P_r$ on its left, 
top, and right sides respectively,  is $W(u_\beta)- W(u_\alpha)$. The 
iteration continues until all the points of $P'$ are considered by the 
sweep line.

\begin{lemma} \label{fixed_side}
The $cLWR$ of maximum weight with $(p,q)$ on its one side can be computed in $O((n+m)\log(n+m))$ time.
\end{lemma}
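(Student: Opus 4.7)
The plan is to show that the sweep above $\ell_{pq}$ faces $O(n+m)$ points of $P'$ and processes each of them in amortized $O(\log(n+m))$ time, so that the total cost matches the stated bound. Two things need verification: (i) the $EXCESS$ lazy scheme, together with the subtree summaries $MIN$ and $MAX$, correctly represents the current cumulative weights of all projected points in $\cal T$; and (ii) every insertion, pushdown, rebalancing step, and extremum query visits only $O(\log(n+m))$ nodes in the amortized sense.

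For correctness, I would maintain the invariant that the true cumulative weight of a leaf $u_j$ equals its stored $W(u_j)$ plus the sum of $EXCESS$ values along the root-to-$u_j$ path in $\cal T$. When a new $u_i$ is inserted, exactly the leaves with $x(u_j) > x(u_i)$ must see their cumulative weight increase by $w(u_i)$, and these are precisely the leaves lying in the right subtrees hanging off each node at which the insertion path descends to a left child. Adding $w(u_i)$ to the $EXCESS$ field of every such right child therefore preserves the invariant. Whenever the algorithm later visits a node with nonzero $EXCESS$, it pushes the value to both children, adjusting their $EXCESS$ and re-tightening their $MIN$ and $MAX$ fields; after pushdown, the $MIN$ and $MAX$ of any node on the visited path reflect the current minima and maxima of true cumulative weights in its subtree. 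The queries for $u_\alpha$ and $u_\beta$ are then the standard prefix/suffix-extremum descents along a single root-to-threshold path: at each branching node one consults the $MIN$ (respectively $MAX$) of the sibling subtree that lies entirely to the left (respectively right) of the threshold, shifted by the accumulated $EXCESS$ along the descent.

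For the time bound, I would invoke the weight-balanced leaf-search tree of Overmars~\cite{overmars1983}: its depth is $O(\log N)$ where $N$ is the current number of leaves, and rebuilding an out-of-balance subtree of size $s$ amortizes to $O(\log N)$ per insertion via the standard potential argument. Each operation of the sweep touches $O(\log(n+m))$ nodes on a single search path, performs $O(1)$ work per node (comparison, $EXCESS$ pushdown, $MIN$/$MAX$ update, query combination), and since $N \le n+m$ throughout, summing over the $O(n+m)$ events produces the claimed $O((n+m)\log(n+m))$ total.

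The main obstacle I foresee is the interaction between lazy propagation and subtree rebuilding: before rebuilding an unbalanced subtree rooted at $v$, all pending $EXCESS$ on the path from the root to $v$, as well as every $EXCESS$ stored strictly inside the subtree, must first be flushed so that the leaves hold their true cumulative weights; only then may one recompute a fresh balanced subtree whose internal nodes have zero $EXCESS$ and correctly computed $MIN$/$MAX$. Flushing along the root-to-$v$ path costs $O(\log(n+m))$, and flushing inside the subtree is linear in its size and therefore absorbed by the rebuild budget, but checking that this flush-then-rebuild protocol preserves the amortized $O(\log(n+m))$ charge per insertion is the delicate part of the analysis.
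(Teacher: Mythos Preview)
Your proposal is correct and follows essentially the same approach as the paper: both rely on the amortized $O(\log N)$ insertion cost of Overmars' weight-balanced leaf-search tree and observe that the auxiliary $EXCESS$, $MIN$, $MAX$ information can be recomputed in time linear in the subtree size whenever a rebuild is triggered. The paper's proof is terse, whereas you spell out the invariant for the lazy $EXCESS$ propagation and explicitly identify the flush-then-rebuild interaction; that interaction is exactly what the paper covers in its single sentence about setting the fields in $O(|{\cal T}|)$ time during a rebuild, so your anticipated ``obstacle'' is already the heart of their argument and your resolution of it matches theirs.
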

\begin{proof} 
Follows from the fact that the amortized insertion time of a point in 
$\cal T$ is $O(\log n)$ \cite{overmars1983}. While rebuilding, due to 
the violation of balance condition, the setting of $EXCESS$, $MIN$ and 
$MAX$ fields of each node can also be done in $O(|{\cal T }|)$ time, 
and rebuilding of $\cal T$ is needed after at least $O(\log n)$ updates 
\cite{overmars1983}.
\end{proof}
The algorithm proposed above is not in-place. It uses a preprocessed data 
structure implemented in an $O(n)$ extra space. Lemma \ref{fixed_side} and 
the fact that we need to consider $\ell_{pq}$ for each pair $p,q \in P_r$ 
suggest the following result:
 
\begin{theorem}
An $LWR$ of arbitrary orientation for a set of weighted  points 
can be computed in $O(m^2(n + m) \log(n + m))$ time using $O(n)$ workspace.
\end{theorem}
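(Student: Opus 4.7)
The plan is to drive an outer loop over the $O(m^2)$ pairs suggested by the structural observation and, for each pair, invoke Lemma \ref{fixed_side} on both sides of the line through the pair. Specifically, for every pair $(p,q)$ with $p,q\in P_r$, I would form the line $\ell_{pq}$, partition the remaining points of $P$ into the subsets $P^+$ and $P^-$ lying above and below $\ell_{pq}$, and run the sweep-line procedure behind Lemma \ref{fixed_side} once on each side. The procedure returns the heaviest $cLWR$ that has $(p,q)$ on one of its horizontal sides; maintaining a running maximum across all pairs (and both sides) then yields the global $LWR$.

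The cost accounting is a direct product: Lemma \ref{fixed_side} gives $O((n+m)\log(n+m))$ per half-plane invocation, enumeration of pairs contributes a factor of $O(m^2)$, and handling the two sides adds only a constant factor, giving the claimed $O(m^2(n+m)\log(n+m))$ total time. For the space bound, I would observe that the weight-balanced leaf search tree $\cal T$ from Lemma \ref{fixed_side} is the only non-trivial data structure used, it is discarded and rebuilt for each new pair $(p,q)$, and at any point $|{\cal T}| = O(n+m)$; this keeps the peak working memory at $O(n)$ extra beyond the input, matching the workspace bound in the theorem.

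Correctness follows from chaining the structural observation with Lemma \ref{fixed_side}: the optimal $LWR$ must have some side anchored by two points of $P_r$, so on the iteration whose line $\ell_{pq}$ aligns with that side, the inner procedure enumerates the best $cLWR$ compatible with that anchoring, and the global maximum thus appears among the candidates compared. The main obstacle I expect is not the cost analysis but the book-keeping around the corner case permitted by the observation, namely a red point simultaneously serving as the shared endpoint of two adjacent sides of the optimum, together with general position subtleties (points lying exactly on $\ell_{pq}$, coincident projections on the sweep axis). These would be handled by a standard symbolic perturbation and by being careful that both ``bottom-anchored'' and ``top-anchored'' orientations of $(p,q)$ relative to the rectangle are tried; neither affects the asymptotic bounds.
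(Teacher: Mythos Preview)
Your approach is exactly the paper's: iterate over all candidate pairs $(p,q)$ determining the anchored side, apply Lemma~\ref{fixed_side} to each half-plane, and multiply the $O((n+m)\log(n+m))$ per-pair cost by the number of pairs; the space bound comes from the single tree $\cal T$ that is rebuilt per pair. Note that you (like the paper) write ``$O(m^2)$ pairs'' while simultaneously taking $p,q\in P_r$ with $|P_r|=n$, so the outer-loop count is literally $O(n^2)$; this is an inconsistency already present in the paper's statement and not a divergence in your argument.
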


\section{Computing largest axis-parallel monochromatic cuboid 
$\IR^3$} \label{3d}
We now propose an in-place algorithm for computing a monochromatic 
axis-parallel cuboid with the maximum number of points. Here, the input 
is a set of bi-chromatic points $P=P_r\cup P_b$ inside a $3D$ 
axis-parallel region $\cal A$ bounded by six axis-parallel planes, 
where $P_r$ is the set of $n$ {\em red} points and $P_b$ is the set 
of $m$ {\em blue} points. The input points are given in an array, 
also called $P$. The $x,y,z$ coordinates of a point $p_i\in P$ are 
denoted by $x(p_i)$, $y(p_i)$ and $z(p_i)$ respectively, along with 
its color information $c(p_i)$ = red/blue. A cuboid is said to be a 
candidate for $LRC$ if its every face either coincides with a face of 
$\cal A$ or passes through a blue point, and its interior does not 
contain any blue point. Such a cuboid will be referred to as $cLRC$. 
The objective is to identify an $LRC$, which is a $cLRC$ containing the 
maximum number of red points. Similarly, a blue cuboid containing the 
maximum number of blue points ($LBC$) can be defined. The $LMC$ is 
either $LRC$ or $LBC$ depending on whose number of points is more.

We compute all possible maximal empty cuboid \cite{NB} among the $m$ blue 
points. Each one will be a $cLRC$; we perform a range query to count the 
number of red points it contains. In our algorithm, three types of $cLRC$s' 
inside $\cal A$ are considered separately. 
\begin{description} 
\item[type-1:] the $cLRC$ with both top and bottom faces aligned with the top 
and bottom faces of $\cal A$, 
\item[type-2:] the $cLRC$ whose top face is aligned with the top face of 
$\cal A$, but bottom face passes through a blue point in $P_b$, and
\item[type-3:] the $cLRC$ whose top face passes through some blue point 
in $P_b$. The bottom face may pass through another blue point in $P_b$ or 
may coincide with the bottom face of $\cal A$.
\end{description}

As a preprocessing, we first split the array $P$ into two parts, namely 
$P_r$ and $P_b$, such that $P_r = P[1,\ldots,n]$ and $P_b=P[n+1,\ldots,n+m]$.  
We construct an in-place 2-d tree $\cal T$ with the points in $P_r$ 
considering their $(x,y)$ coordinates, which will be used for the 
range-counting query for the $cLRC$s'. We also sort the points in $P_b$ 
in decreasing order of their $z$-coordinates. Thus, the preprocessing 
needs $O(m\log m + n\log n)$ time. 

In \cite{DumitrescuJ13,kaplan2008}, it is proved that the number of maximal 
empty hyper-rectangles among a set of $n$ points in $\IR^d$ is $O(n^d)$. 
In the following subsections, we will analyze the processing of these 
three types of $cLRC$s' in an in-place manner. The largest among the 
{\em type-$i$} $cLRC$ will be referred to as {\em type-$i$} $LRC$, for 
$i=1,2,3$. 

\remove{
\begin{algorithm}
\KwIn{An array containing the point set $P=\{p_1, p_2,\ldots,p_{m+n}\}$ in 
3-d axis-parallel box $\cal A$, where $p_i=(x_i,y_i,z_i)$.}
\KwOut{The largest axis-parallel red cuboid $(LRC)$ $C$, and its size 
$size_{max}$.} 
$size_{max}=0$\;
Execute TYPE-1\_LRC($size_{max},C$)\; 
Execute TYPE-2\_LRC($size_{max},C$)\;
Execute TYPE-3\_LRC($size_{max},C$)\;
Report $size_{max}, C$\;
\caption{\LRC}
\label{LRC}
\end{algorithm}
}

\subsection{Computation of {\it type-1} $LRC$} \label{type1}

As both the top and bottom faces of the {\em type-1} $cLRC$s' are aligned 
with the top and bottom faces of $\cal A$, if we consider the projections 
of the points in $P_b$ on the top face of $\cal A$, then each maximal 
empty axis-parallel rectangle ($MER$) on the top face of $\cal A$ will 
correspond to a {\it type-1} $cLRC$. Thus, the problem reduces to 
the problem of computing all the $MER$s' using the array $P_b$ in an in-place manner, and 
for each $MER$, count the number of points of $P_r$ inside the corresponding 
{\em type-1} $cLRC$ using the 2-d tree $\cal T$ with the projection of 
points in $P_r$ on the top face of $\cal A$. 

\remove{
\begin{algorithm}
 \SetKwData{return}{return}
 \KwIn{$P_r$ and $P_b$}
 \KwOut{TYPE-1 $LRC$}
 Consider the projections $b_i,~ \forall ~p_i \in P_b$ and $r_i, ~ \forall 
 ~p_i \in P_r$ on the top face of $\cal A$\;
 Construct a 2d-Tree $\cal T$ with the red points $r_i, i= 1\ldots,n$ 
 \tcc*{see Section \ref{preprocessing2d}}
 Compute all $MER$s for the projected points $b_i, i= 1,\ldots,m$ 
 \tcc*{see Section \ref{twod}}
 \For{ each generated MER}{perform range counting in $\cal T$ 
 \tcc*{see Section \ref{counting2d}}} 
 \return cuboid $C$ with maximum size $size_{max}$\;
 \caption{TYPE-1\_LRC($size_{max},C$)}
 \label{type-1}
\end{algorithm}
}

\begin{lemma}\label{res-type1}
The number of {\em type-1} $cLRC$ is $O(m^2)$ in the worst case and the one of maximum size can be computed in $O(m^2\sqrt{n}+n\log n)$ time.
\end{lemma}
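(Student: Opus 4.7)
The plan is to reduce the enumeration of type-1 $cLRC$s to a planar problem about maximal empty rectangles (MERs). A type-1 $cLRC$ has its top and bottom faces flush with those of $\cal A$, so it is completely determined by its axis-parallel footprint $R$ on the top face of $\cal A$. By the definition of $cLRC$, the interior of $R$ contains no projection of a blue point, and each of its four sides either lies on a boundary face of $\cal A$ or passes through the projection of some point of $P_b$. Hence $R$ is exactly a maximal empty rectangle (MER) among the planar set $B'$ obtained by projecting $P_b$ onto the top face. Conversely, every MER of $B'$ lifts to a valid type-1 $cLRC$. The classical bound of Naamad, Lee and Hsu then gives at most $O(m^2)$ MERs in the worst case, which proves the first claim.

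For the computation, I would first build an in-place 2-d tree $\cal T$ on the projection of $P_r$ onto the top face using Lemma \ref{preprocess}; this takes $O(n\log n)$ time and $O(1)$ extra workspace. Next I would enumerate all MERs of $B'$ by an in-place variant of the staircase sweep of Naamad--Lee--Hsu / Orlowski, in $O(m^2)$ total time. For every MER produced by the sweep, I would invoke the orthogonal range counting query of Lemma \ref{count} on $\cal T$ to count in $O(\sqrt{n})$ time the number of red projections inside the footprint, and keep the footprint with the largest count.

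Summing the contributions gives $O(n\log n)$ for preprocessing, $O(m^2)$ for generating the MERs, and $O(m^2)\cdot O(\sqrt{n})$ for the counting queries, for a total of $O(m^2\sqrt{n}+n\log n)$, as claimed. The main obstacle I anticipate is making the MER enumeration run in $O(1)$ extra workspace: the textbook staircase sweep uses a data structure of size $\Omega(m)$, so I would need to encode the currently active staircase by reordering the sub-array holding $P_b$ itself (using the ordering by $z$-coordinate already present after preprocessing as a scaffold), rather than allocating auxiliary storage. Once this bookkeeping is settled, the $O(m^2)$ combinatorial bound on MERs guarantees that the per-MER $O(\sqrt{n})$ counting cost is the dominant term.
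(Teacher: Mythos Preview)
Your proposal is correct and matches the paper's approach: both reduce type-1 $cLRC$s to planar MERs among the blue projections (invoking Naamad--Lee--Hsu for the $O(m^2)$ count), build the in-place 2-d tree on the red projections in $O(n\log n)$ time, and spend $O(\sqrt{n})$ per MER on a counting query. Regarding the in-place obstacle you flag, the paper sidesteps it by enumerating the MERs one base point at a time---for each projected blue point $b_i$ it runs the upward sweep of Section~\ref{twod}, which maintains only the current interval $[\alpha,\beta]$---so the $O(1)$-workspace requirement is met without any staircase encoding inside $P_b$.
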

\begin{proof}
The first part of the result i.e the number of {\em type-1} $cLRC$ follows from \cite{naamad1984maximum}. (i) We can generate all the 
$MER$s with bottom boundary passing through a point $b_i$
on the top face of $\cal A$ using the method described in Section \ref{twod} in $O(m)$ time, and (ii) 
for each $MER$, the number of projected red points inside that $MER$ can be obtained in $O(\sqrt{n})$ time using the 2-d tree $\cal T$.  
 The second part in the result follows from the fact that $\cal T$ can be generated in $O(n\log n)$ time (see Section \ref{preprocessing2d}).
\end{proof}

\subsection{Computation of {\it type-2} $LRC$} \label{type-2}
Now we describe the in-place method of computing the largest {\em type-2} $cLRC$ whose top face is aligned
with the top face of $\cal A$, but bottom face passes through a point in $P_b$. 
 
We will use $p_1,p_2, \ldots, p_m$ to denote the points in $P_b$ in decreasing order of their $z$-coordinates. 
We consider each point $p_i \in P_b$ in order and compute $LRC(p_i)$, the largest {\it type-2} red cuboid 
whose bottom face passes through $p_i$. Let $B_i=\{b_1, b_2, \ldots, b_{i-1}\}$, $i<m$ be the set containing the projection of all 
the blue points having $z$-coordinate larger than $z(p_i)$ on the plane $H(p_i)$. Similarly, $R_i=\{r_1,r_2, \ldots\}$ are 
the projection of all the red points having $z$-coordinate larger than $z(p_i)$ on the plane $H(p_i)$.
Thus, $LRC(p_i)$ corresponds to a rectangle on the plane $H(p_i)$ that 
contains $p_i$, but no point of $B_i$ in its interior, and has the maximum number 
of points of $R_i$. 

As in the earlier section, we can partition the array $P$ into two contiguous 
blocks $P_b$ and $P_r$. The block $P_b$ contains all the blue points in 
decreasing order of their $z$-coordinates. The block $P_r$ contains all the 
red points. The blue points are processed in top-to-bottom order. Global 
counters $LRC$ and $MAX_r$ are maintained to store the $LRC$ detected so 
far, and its size. While processing each point $p_i \in P_b$, let $B_i$ 
denote the blue points with their $z$-coordinates greater than $z(p_i)$. 
We split $P_r$ into two parts. The left part contains an in-place 2-d tree 
${\cal T}_i$ with all the red points having $z$-coordinates greater than 
$z(p_i)$. The right part of $P_r$ contains the red points with $z$-coordinates 
less than $z(p_i)$. We can compute all the $MER$s using the set $B_i$ as 
in Section \ref{type1}. For each generated $MER$ if it contains $p_i$ in 
its interior, then we perform the range counting query in ${\cal T}_i$ to 
compute the number of red points inside it. $LRC$ and $MAX_b$ are updated, 
if necessary. Thus, we have the following result.  

\begin{lemma} \label{res-type2}
The {\em type-2} $LRC$ can be computed in $O(m^3\sqrt{n}+mn\log n)$ time.  
\end{lemma}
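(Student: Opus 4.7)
The plan is to verify both correctness and the stated running time by decomposing the total work at each iteration into three separable parts: (i) reorganizing the red-point block of $P$ so that the in-place 2-d tree ${\cal T}_i$ can be (re)built on it, (ii) enumerating the maximal empty rectangles on the horizontal plane $H(p_i)$ induced by $B_i$, and (iii) running range-counting queries in ${\cal T}_i$ for each such rectangle. Correctness is almost immediate once the enumeration is set up: every type-2 $cLRC$ has its bottom face through a unique blue point $p_i\in P_b$, and its projection onto $H(p_i)$ is a maximal empty axis-parallel rectangle among $B_i$ that contains $p_i$ in its interior; since the outer loop scans every $p_i$, the global maximum across the counting queries is exactly the type-2 $LRC$.

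For the time bound I would handle the three parts one by one and then sum. For part (i), note that because the points $p_1,\ldots,p_m$ are visited in decreasing $z$-order, the set of red points eligible to lie in ${\cal T}_i$ grows monotonically as $i$ increases, so at iteration $i$ we only need to move the red points whose $z$-coordinate lies between $z(p_{i-1})$ and $z(p_i)$ from the ``right part'' of $P_r$ into the ``left part''. The cleanest in-place approach is simply to re-partition $P_r$ by the predicate $z(\cdot)>z(p_i)$ in $O(n)$ time and then rebuild the in-place 2-d tree on the prefix using Lemma \ref{preprocess} in $O(n\log n)$ time; summed over the $m$ iterations this is $O(mn\log n)$. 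For part (ii), the MER-enumeration argument used to prove Lemma \ref{res-type1} applied to the $|B_i|\le m$ points of $B_i$ produces $O(m^2)$ maximal empty rectangles in $O(m^2)$ total time per iteration. For part (iii), each enumerated rectangle that contains $p_i$ triggers one call to the range-counting routine of Lemma \ref{count}, costing $O(\sqrt{n})$, so the counting cost is $O(m^2\sqrt{n})$ per iteration and $O(m^3\sqrt{n})$ overall. Folding in the one-time $O(m\log m+n\log n)$ preprocessing yields $O(m^3\sqrt{n}+mn\log n)$, as claimed.

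The main obstacle I expect is the in-place rebuild of ${\cal T}_i$: the in-place 2-d tree construction of Lemma \ref{preprocess} requires its input to sit in a contiguous subarray whose length equals the number of points to be indexed, but between iterations this length changes and the ``newly eligible'' red points can be scattered throughout the right block of $P_r$. I would resolve this by first invoking an in-place Hoare-style partition on $P_r$ with pivot value $z(p_i)$, moving all red points with $z$-coordinate greater than $z(p_i)$ into the prefix of $P_r$ in $O(n)$ time and $O(1)$ extra space; only then would I rebuild the k-d tree on that prefix. Since both operations are dominated by the $O(n\log n)$ rebuild cost, they do not disturb the asymptotic bound, and the whole loop stays in-place because the prior tree is discarded before the re-partition begins.
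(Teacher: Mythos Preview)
Your proposal is correct and follows essentially the same approach as the paper: for each of the $m$ blue points $p_i$ you rebuild the in-place 2-d tree on the red points above $H(p_i)$ in $O(n\log n)$ time, enumerate the $O(m^2)$ MERs among $B_i$, and query each at $O(\sqrt{n})$ cost, giving $O(m^2\sqrt{n}+n\log n)$ per iteration and hence $O(m^3\sqrt{n}+mn\log n)$ overall. The paper's own proof is terser---it simply invokes Lemma~\ref{res-type1} for the per-iteration cost and multiplies by $m$---but your added discussion of the in-place repartitioning of $P_r$ before each rebuild is a faithful elaboration of the same argument.
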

\begin{proof}
The time complexity of processing each point $p_i \in P_b$ 
follows from Lemma \ref{res-type1}. Since $m$ blue points are processed, 
the result follows. 
\end{proof}

\subsection{Computation of {\it type-3} $LRC$} \label{type-3}
Here also we use $p_1,\ldots,p_m$ to denote the points in $P_b$ in decreasing 
order of their $z$-coordinates, and the algorithm processes the members in 
$P_b$ in this order. We now describe the {\it phase of processing} of a point 
$p_i \in P_b$. It involves generating all the {\it type-3} $cLRC$s whose top 
face passes through $p_i$; their bottom face may pass through another blue 
point $p_j\in P_b$ or may coincide with the bottom face of $\cal A$. 
Consider the horizontal plane $H(p_i)$ passing through $p_i\in P_b$ and sweep 
it downwards until it hits the bottom face of ${\cal A}$. During this phase when the sweeping 
plane touches $H(p_j)$ (i.e. hits a point $p_j \in P_b$), the points inside 
these two horizontal planes $H(p_i)$ and $H(p_j)$ will participate in computing 
the $cLRC$s with top and bottom faces passing through $p_i$ and $p_j$, 
respectively. Let, $B_{ij}= \{b_i,\ldots,b_j\}$ be the projections of 
these blue points $p_i,\ldots,p_j$ $(1 \leq i <j\leq n)$ on the plane $H(p_i)$.
Similarly, consider the projections $R_{ij}$ of the red points 
on the plane $H(p_i)$ those lie in between the planes $H(p_i)$ and $H(p_j)$. 
Our objective is to determine a $cLRC$ corresponding to an $MER$ on the plane 
$H(p_i)$ with the points in $B_{ij}$ as obstacles that contains the maximum number 
of points in $R_{ij}$.

In the phase of processing $p_i \in P_b$, the points of $P$ above $H(p_i)$ 
does not participate in this processing. Those points of $P_b$ (resp. $P_r$) 
are separately stored at the beginning of the array $P_b$ (resp. $P_r$). 
From now onwards, by $P_b$ (resp. $P_r$) we will mean the blue (resp.
red) points below $H(p_i)$. 

We consider two mutually orthogonal axis-parallel lines $x=x(p_i)$ and 
$y=y(p_i)$ on the plane $H(p_i)$ that partition $H(p_i)$ into four quadrants. 
The blue points that belong to the $\theta$-th quadrant, are denoted by 
$P_b^\theta$, and are stored consecutively in the array $P_b[i+1, \ldots, m]$. 
We use $m_\theta=|P_b^\theta|$. While processing the point $p_j \in P_b$ 
during the sweep in this phase, we use $B_{ij}^\theta$ to denote  
the projections of the subset of points in $P_b^\theta$ that lie between the planes 
$H(p_i)$ and $H(p_j)$, $\theta=1,2,3,4$. The members in $B_{ij}^\theta$ are 
stored in the consecutive locations of the array $P_b^\theta$ in decreasing 
order of their $z$-coordinates. We maintain four index variables $\chi_\theta$, 
$\theta=1,2,3,4$, where $\chi_\theta$ indicates the last point hit by the 
sweeping plane in the $\theta$-th quadrant. Thus, $p_j \in P_b\setminus 
(\cup_{\theta=1}^4 B_{ij}^\theta)$, and is obtained by comparing the 
$z$-coordinates of the points $\{P_b[\chi_\theta+1], \theta=1,2,3,4\}$.
We will use $R_{ij}$ to denote the projection of the points in $P_r$ 
lying between $H(p_i)$ and $H(p_j)$. These are stored at the beginning of the 
array $P_r$.

In each quadrant $\theta$, we define the unique maximal closest stair 
$\textit{STAIR}_\theta$ around $p_i$ with a subset of points of 
$B_{ij}^\theta$ as in \cite{cccgDeN11,NB}. The projection 
points of $B_{ij}^\theta$, that determine $\textit{STAIR}_\theta$, are stored
at the beginning of the sub-array $P_b^\theta$ in order of their $y$-
coordinates\footnote{The remaining elements ($B_{ij}^\theta \setminus 
\textit{STAIR}_\theta$ are stored just after $\textit{STAIR}_\theta$ in a
contiguous manner in $P_b^\theta$ so that the first unprocessed element 
in the quadrant $\theta$ is obtained at $P_b[\chi_\theta+1]$.}. Thus, $\bigcup_{\theta=1}^4 
\textit{STAIR}_\theta$ forms an empty ortho-convex polygon $OP$ on $H(p_i)$ 
(see Figure \ref{3dtype2i}(a)). As a consequence, the problem of finding a 
{\it type-3} $LRC$, with top and bottom faces passing through $p_i$ and $p_j$ 
respectively, maps to finding an $MER$ inside this ortho-convex polygon 
that contains $b_j$ and maximum number of points in the set $R_{ij}$.  
\remove{The 
pseudo-code of computing Algorithm $\textit{STAIR}_1$ is given in 
Algorithm \ref{stair1}. $\textit{STAIR}_\theta$, $\theta=2,3,4$ can be computed 
similarly. }

\begin{figure}[ht] 
  \label{ fig7} 
  \begin{minipage}[b]{0.5\linewidth}
    \centering
    \includegraphics[width=.65\linewidth]{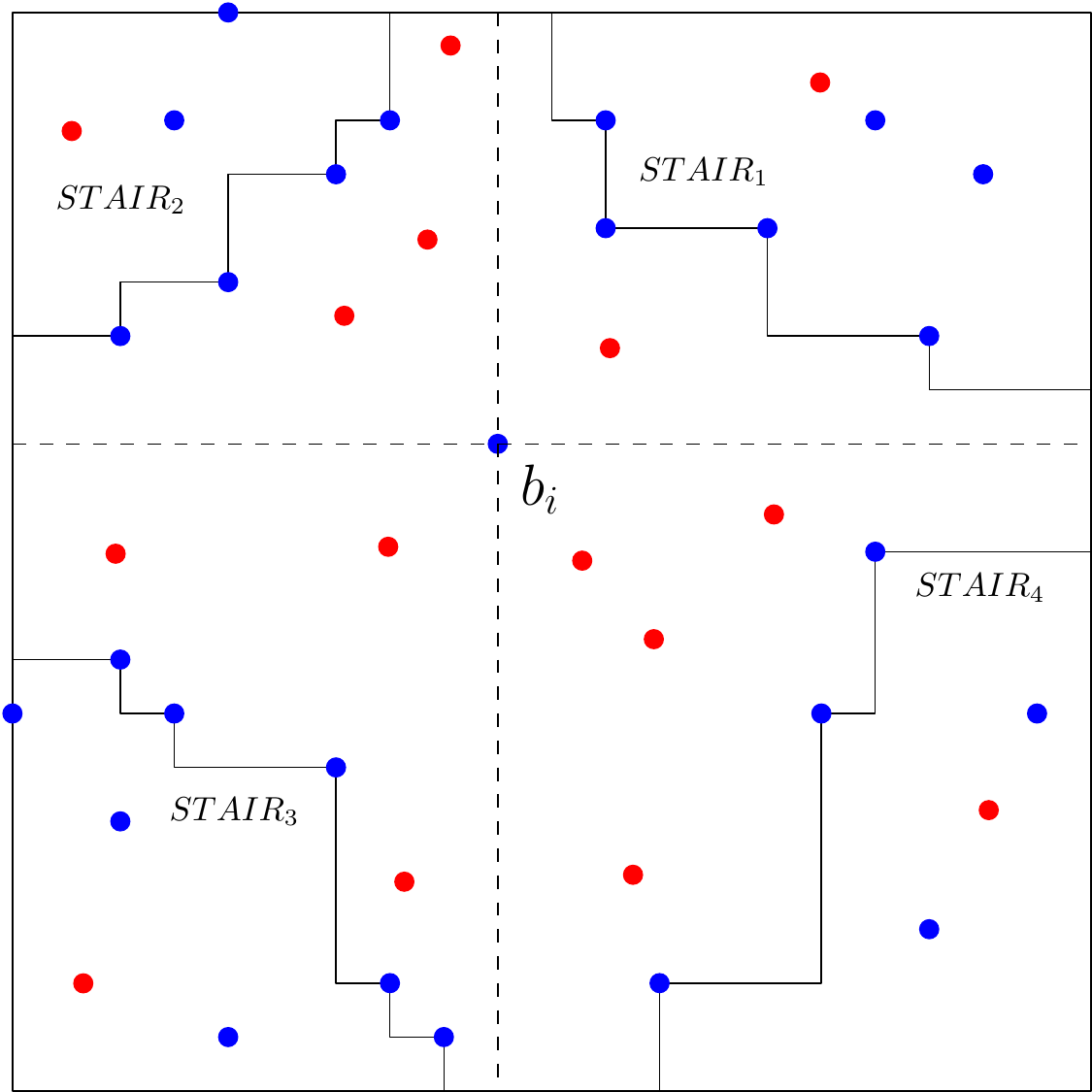} 
    \centerline{(a)}
  \end{minipage}
  \begin{minipage}[b]{0.5\linewidth}
    \centering
    \includegraphics[width=.65\linewidth]{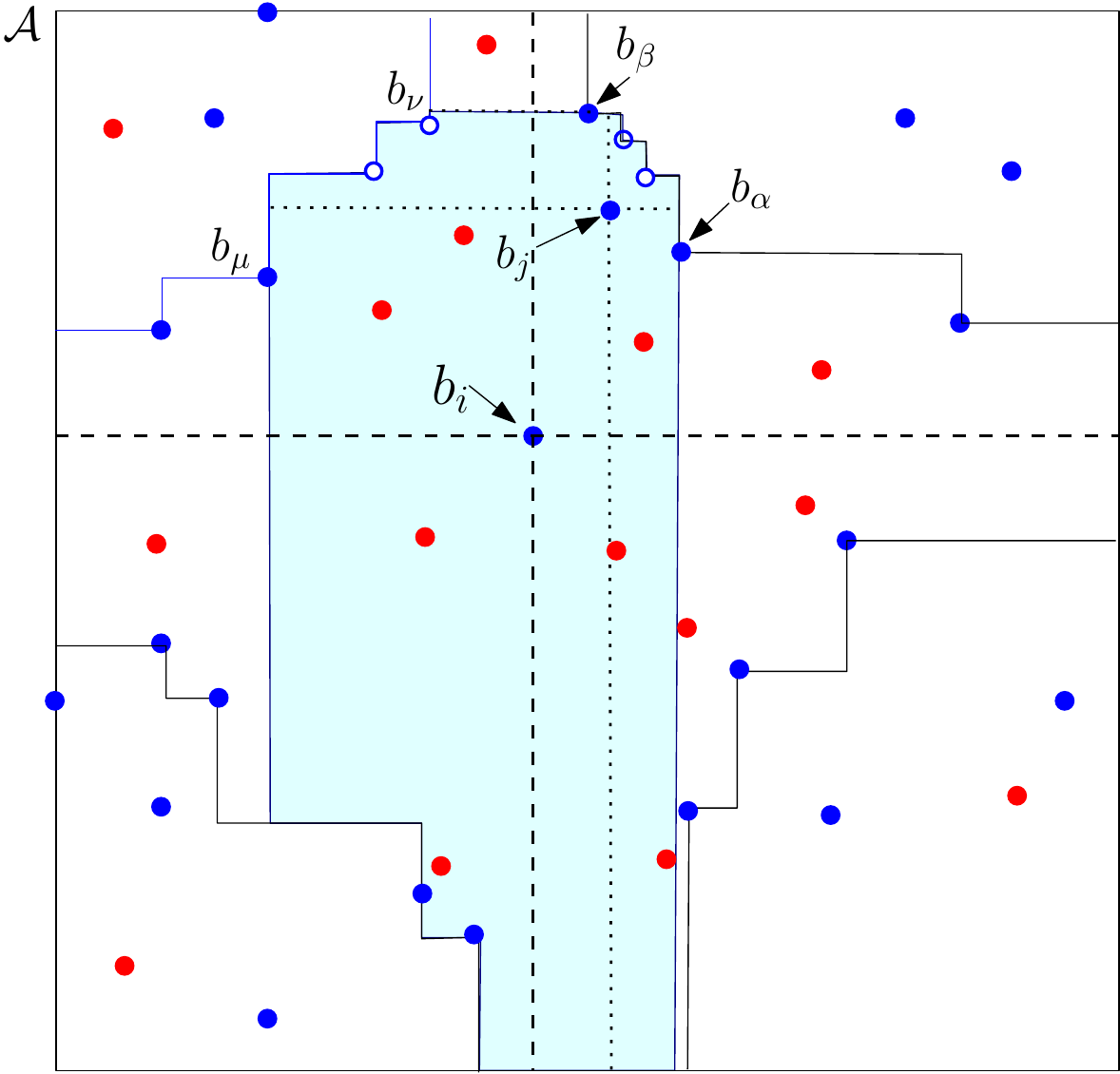} 
    \centerline{(b)}
  \end{minipage} 
  \caption{(a) Empty ortho-convex polygon around $p_i$ (b) Extracting the region in $OP$ for generating MERs with top and bottom face passing through $p_i$ and $p_j$}
  \label{3dtype2i}
\end{figure}

Thus we need to: (i) construct the in-place 2-d tree ${\cal T}_{ij}$ with the points in 
$R_{ij}$, (ii) compute all maximal empty rectangles in $OP$ that contains both 
$b_i$ and $b_j$ (see ), (iii) for each generated maximal empty rectangle ($MER$) 
perform the rectangular range counting query in ${\cal T}_{ij}$, and (iv) 
update $OP$ by inserting $b_j$ in the corresponding $\textit{STAIR}$ for
processing the next blue point $p_{j+1} \in P_b$ during this phase. The tasks (i) and (iii) 
performed as mentioned in Sections \ref{preprocessing2d} and \ref{counting2d} 
respectively. Task (ii) is explained in Section \ref{compMER3} 
(also see Algorithm  
\ref{TYPE-3}). Task (iv) is explained in Section
\ref{OPx} (also see Algorithm \ref{updatestair3d}). 

\begin{algorithm}
 \SetKwData{mini}{minimum}
 
\KwIn{The array $P_r$ and $P_b$} 
\KwOut{TYPE-3 $LRC$ of maximum size}
Sort the points in $P_b$ in decreasing order of their $z$-coordinates\;
\For (\tcc*[f]{Compute MER$(p_i)$}){$i \leftarrow 1$ \KwTo $m$}{ 
Partition the points in $P_b[i+1,i+2,\ldots,m]$ into  $P_b^\theta$, 
$\theta=1,2,3,4$\;  
$P_b^\theta, \theta \in \{1,2,3,4\}$ are sorted in decreasing order of their $z$-coordinates\;
$m_1$,$m_2$,$m_3$,$m_4$: index of the last point in each of $P_b^\theta, \theta \in \{1,2,3,4\}$ respectively\;
$\nu_1$,$\nu_2$,$\nu_3$,$\nu_4$: index of the last point in each of $STAIR_\theta, \theta \in \{1,2,3,4\}$ respectively\;
$\chi_1,\chi_2,\chi_3,\chi_4$: variables to indicate the next sweep line in $P_b^\theta, \theta \in \{1,2,3,4\}$ respectively\;
$\chi_1,\chi_2,\chi_3,\chi_4$ initialized with $1,m_1+1,m_2+1,m_3+1$ respectively\;
$count=i$\;
\While{$count\neq m$}
{
$count=count+1$\;
$z$=\mini$\{z(P_b[\chi_1]), z(P_b[\chi_2]),z(P_b[\chi_3]),z(P_b[\chi_4])\}$\;
Let, minimum attains for $P_b[\chi_\theta]$ and in quadrant $\theta$\;
Compute\_MAX\_MER($i,\chi_\theta,\theta,R_{max}$)
\tcc*[r]{call Algorithm \ref{MER2d}.}
\If {$|R_{max}| > size_{max}$} {$size_{max}=|R_{max}|$; $C=R_{max}$;}
$\USt_\theta(\chi_\theta)$\;
$\chi_\theta=\chi_\theta+1$\;
}
}
\caption{TYPE-3\_LRC($size_{max},C$)}
\label{TYPE-3}
\end{algorithm}

\subsubsection{Computation of $MER(p_i,p_j)$} \label{compMER3}

Without loss of generality, assume that $b_j$ (projection of $p_j$ on the plane $H(p_i)$) 
is in the first quadrant. If $b_j$ is in some other quadrant, then the situation is similarly 
tackled. 

If there exist any point in the $STAIR_1$ which dominates $b_j$, i.e., if there exist any 
blue point $p$ in $STAIR_1$ such that $x(p)< x(b_j)$ and $y(p)<y(b_j)$, then no axis-parallel $cLRC$ is possible whose top boundary passes through $p_i$ and bottom
boundary passes through $p_j$. Therefore we assume that $b_j$ is not dominated by any point in $STAIR_1$. We now determine
the subset of points in $\textit{STAIR}_1 \cup \textit{STAIR}_2$ that can appear in the north boundary of an $MER$ containing
both $b_i$ and $b_j$.

\begin{figure}[ht] 
  \begin{minipage}[b]{0.5\linewidth}
    \centering
    \includegraphics[width=.8\linewidth]{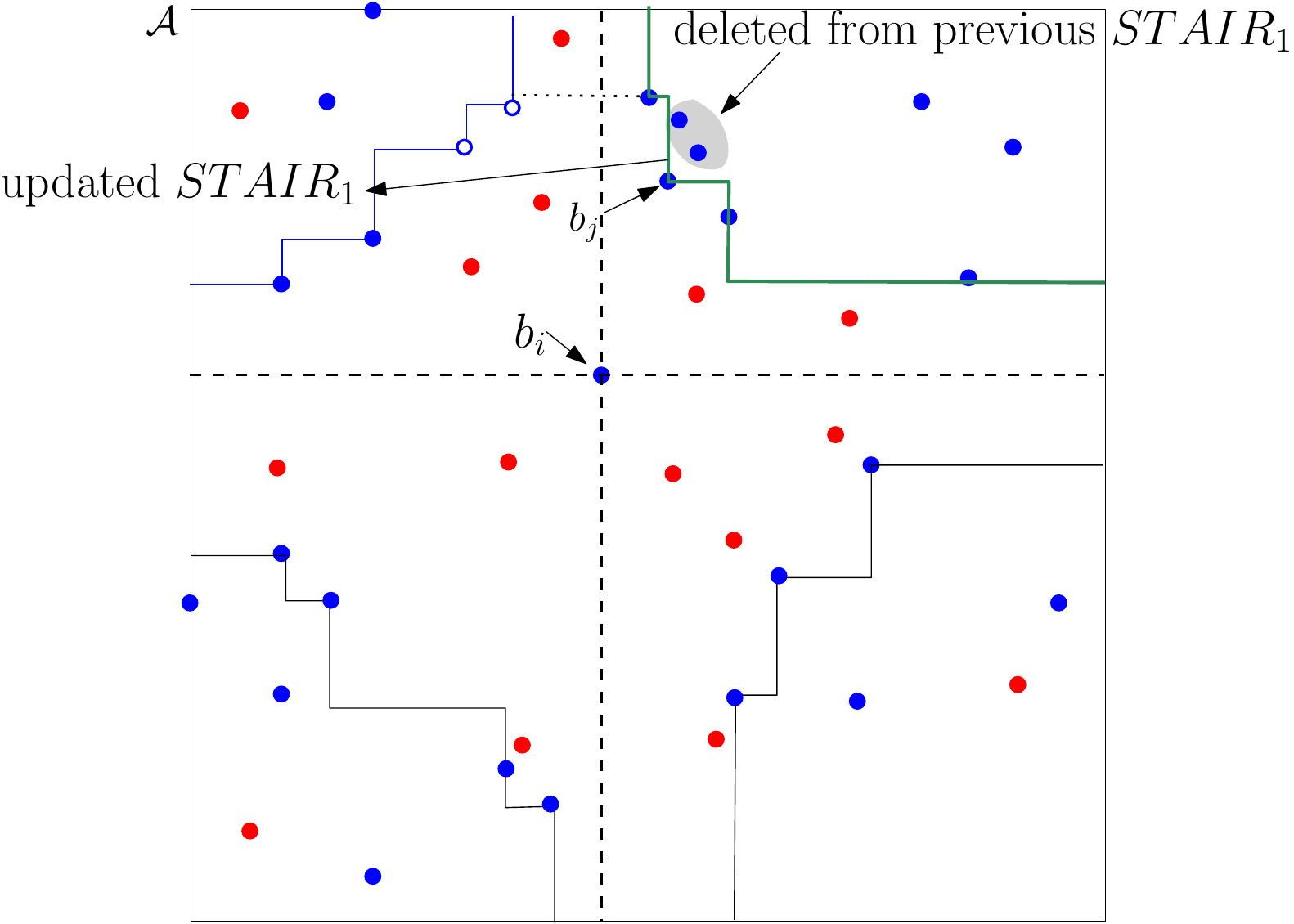} 
    \centerline{(a)}
  \end{minipage}
  \begin{minipage}[b]{0.5\linewidth}
    \centering
    \includegraphics[width=.65\linewidth]{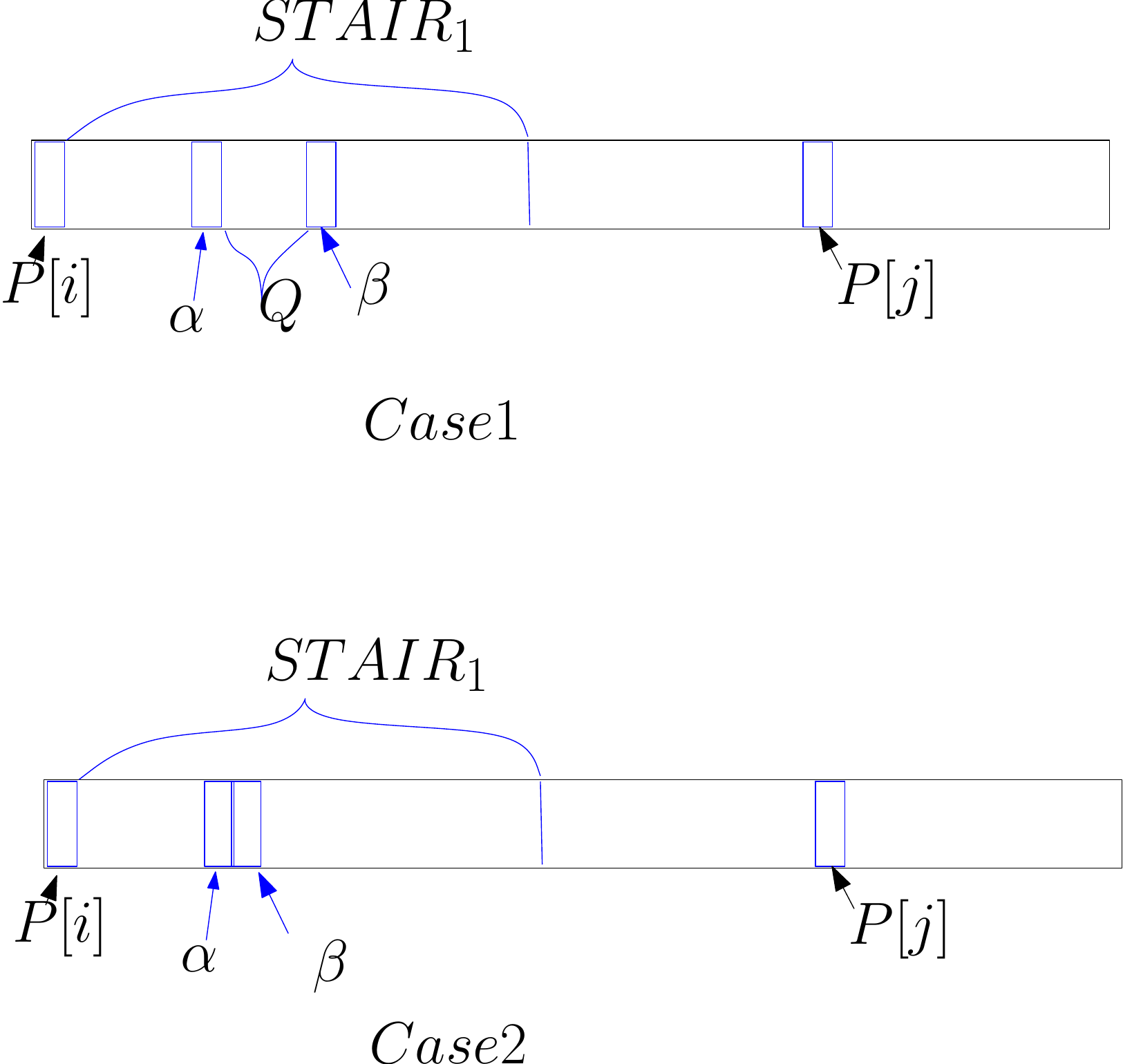} 
    \centerline{(b)}
  \end{minipage} 
  \caption{(a) Update $STAIR_1$ after processing $b_j$ w.r.t. $b_j$  and (b) corresponding array update}
  \label{figtype3}
\end{figure}

Let $\textit{STAIR}_1 = \{ b_k,k=1,2 \ldots,\nu_1 \} \subseteq B_i^1$. Let $b_\alpha \in STAIR_1$ be such that $y(b_\alpha) = \max\{b_k\in STAIR_1|y(b_k) < y(b_j)\}$ (i.e., the $y$-coordinate
of $b_\alpha$ is maximum among all the points in $STAIR_1$ whose $y$-coordinate is lesser than the $y$-coordinate of $b_j$). 
Similarly, let $b_\beta \in STAIR_1$ be such that $y(b_\beta) = 
\min\{b_k \in STAIR_1|x(b_k) < x(b_j)\}$ (i.e., $y$-coordinate of $b_\beta$ is minimum 
among all the points in $STAIR_1$ whose $x$-coordinate is lesser than the
$x$-coordinate of $p_j$). 
We define  $Q=\{b_{\alpha+1}, b_{\alpha+2}, \ldots, b_{\beta-1}\}$ $=\{b_k \in 
\textit{STAIR}_1|x(b_k) > x(b_j)~\text{and}~ y(b_k) > y(b_j)\}$  (see
Figure~\ref{3dtype2i}(b)).

All the axis-parallel $MER$s in $OP$ with north boundary passing through $b_k$, $k
\in \{ \alpha+1,\alpha+2,\ldots,\beta \}$ and containing $p_i$ in its proper interior will contain $b_j$ also.
We draw the projections of $b_j$ and $b_\beta$ on $\textit{STAIR}_2$. Let these two points be $\mu$ and $\nu$,
respectively. If $x(\mu) = x(\nu)$, then no point on $\textit{STAIR}_2$ can appear on the north boundary of a
desired  axis-parallel $MER$. But if $x(\mu) < x(\nu)$, then all the points $p \in \textit{STAIR}_2$ satisfying
$x(\mu) < x(p) < x(\nu)$ can appear on the north boundary of a desired axis-parallel $MER$. In Figure \ref{3dtype2i}(b),
the set of points that can appear on the north boundary of an $MER$ are marked with empty dots.  The method of
computing an axis-parallel $MER$ with a point $p \in \textit{STAIR}_1 \cup \textit{STAIR}_2$ on its north boundary
is given in Algorithm \ref{MER2d}.

\begin{algorithm}
\SetKwInOut{Kw}{Work-Area} \SetKwData{index}{index-of} \SetKwData{maxi}{maximum}\SetKwData{mini}{minimum}
\SetKwData{size}{size}
\small{
\KwIn{$STAIR_1$ = $B[1,2,\ldots,\nu_1]$, $STAIR_2$ = $B[m_1+1,m_1+2,\ldots,\nu_2]$, 
$STAIR_3$ = $B[m_2+1,m_2+2,\ldots,\nu_3]$, $STAIR_4$ = $B[m_3+1,m_3+2,\ldots,\nu_4]$, where the array $B=P_b[i+1, \ldots m]$; $m_\theta$ = number of points 
of $B$ in $\theta$-th quadrant\; }

\Kw{$M$: location to compute the size of the axis-parallel $MER$ containing $p_i,p_j$;
$R$: stores the $(north,south,east,west)$ sides of a rectangle \tcc*{$b_i$: projection of $p_i$} }
 \KwOut{$R_{max}$ \tcc*[r]{red rectangle containing maximum red points} }

$MAX\_size = 0$\;
$\alpha$ = \index {\mini{$y(B[k])$}}: $\forall ~k \in \{m_1+1,\ldots,\nu_2\}$ and $y(B[k]))>y(B[j])$\;
$\beta$ = \index {\maxi{$y(B[k])$}}: $\forall ~k \in \{m_1+1,\ldots,\nu_2\}$ and $x(B[k]))>x(B[j])$; $\beta=\beta+1$\;
$\mu$ = \index {\mini{$y(B[k])$}}: $\forall ~k \in \{m_2+1,\ldots,\nu_3\}$ and $y(B[k]))>y(B[j])$\;
$\nu$ = \index {\maxi{$y(B[k])$}}: $\forall ~k \in \{m_2+1,\ldots,\nu_3\}$ and $y(B[k]))<y(B[\beta])$\;

\For(\tcc*[f]{Call MER with the feasible points of $STAIR_1$ as top boundary  }){$k \leftarrow \alpha$ \KwTo $\beta$}{
$north=P[k]$; $east=P[k-1]$\;
$\theta=$ \index {\maxi{$y(B[\ell])$}}: $\forall ~\ell \in \{m_1+1,\ldots,\nu_2\}$ and $y(B[\ell]))<y(B[k]))$\;
$\psi=$ \index {\maxi{$x(B[\ell])$}}: $\forall ~\ell \in \{m_3+1,\ldots,\nu_4\}$ and $x(B[\ell]))<x(B[k]))$\;
$\phi=$ \index {\maxi{$y(B[\ell])$}}: $\forall ~\ell \in \{m_2+1,\ldots,\nu_3\}$ and $x(B[\ell]))<x(B[\theta])$\;
$\phi'=$ \index {\mini{$y(B[\ell])$}}: $\forall ~\ell \in \{m_2+1,\ldots,\nu_3\}$ and $y(B[\ell]))>y(B[\psi])$\;
\If(\tcc*[f] {Only one MER is possible}){$\phi'>\phi$} {$west =P[\theta]$;
$south=P[\psi]$; $R = (north,east, south,west)$\;
\size=\QIT(${\cal T}_i,R$)\;
	  {\bf if} {\size $>$ MAX\_size} {\bf then} {MAX\_size $\leftarrow$ \size; $R_{max} \leftarrow R$\;} }	  	  	  
\If(\tcc*[f] {Multiple $(\geq 2)$ MERs are possible}){$\phi'\leq \phi$} {
$south=P[\psi']$\; 
\For{$\ell=\psi'$ \KwTo $\psi$}{
$west=P[\ell]$; $R=(north,east,south,west)$; $south=P[\ell]$\;
\size=\QIT(${\cal T}_i,R$)\;
	  {\bf if} {\size $>$ MAX\_size} {\bf then} {MAX\_size $\leftarrow$ \size; $R_{max} \leftarrow R$\;} }
$west=P[\alpha]$;	 $R=(north,east,south,west)$\; 
 \size=\QIT(${\cal T}_i,R$)\;
	  {\bf if} {\size $>$ MAX\_size} {\bf then} {MAX\_size $\leftarrow$ \size; $R_{max} \leftarrow R$\;} }
}
\For(\tcc*[f]{Call MER with the feasible points of $STAIR_2$ as top boundary  }){$k \leftarrow \mu$ \KwTo $\nu$}{	 
 $west=P[k-1]$; $top=P[k]$; $east=P[\theta']$\;
$\theta_1$ = \index{\maxi{$y(B[\ell])$}}:$\forall ~\ell \in \{\theta \ldots, \nu_1\}$ and $y(B[\ell]) < y(B[\mu])$\;
$\psi_1=$ \index {\maxi{$x(B[\ell])$}}: $\forall ~\ell \in \{m_2+1,\ldots,\nu_3\}$ and $x(B[\ell]))>x(B[k-1]))$\;
$\phi_1=$ \index {\maxi{$y(B[\ell])$}}: $\forall ~\ell \in \{m_3+1,\ldots,\nu_4\}$ and $x(B[\ell]))<x(B[\theta_1])$\;
$\phi_2=$ \index {\mini{$y(B[\ell])$}}: $\forall ~\ell \in \{m_2+1,\ldots,\nu_3\}$ and $y(B[\ell]))>y(B[\psi_1])$\;
\If(\tcc*[f] {Only one MER is possible}){$\phi_2>\phi_1$} {$east =P[\theta_1]$;
$south=P[\psi_1]$; $R = (north,east, south,west)$\;
\size=\QIT(${\cal T}_i,R$)\;
	  {\bf if} {\size $>$ MAX\_size} {\bf then} {MAX\_size $\leftarrow$ \size; $R_{max} \leftarrow R$\;} }
\If(\tcc*[f] {Multiple MER is possible}){$\phi_2\leq \phi_1$}{
$south=P[\phi_2]$\;
\For{$\ell=\phi_2$ \KwTo $\phi_1$}{
$west=P[\ell]$; $R=(north,east,south,west)$; $south=P[\ell]$\;
\size=\QIT(${\cal T}_i,R$)\;
	  {\bf if} {\size $>$ MAX\_size} {\bf then} {MAX\_size $\leftarrow$ \size; $R_{max} \leftarrow R$\;} }
$west=P[\alpha_1]$;	 $R=(north,east,south,west)$\; 
 \size=\QIT(${\cal T}_i,R$)\;
	  {\bf if} {\size $>$ MAX\_size} {\bf then} {MAX\_size $\leftarrow$ \size; $R_{max} \leftarrow R$\;} }
	  }	  
}	
 
\caption{Compute\_MAX\_MER($i,j,\theta,R_{max}$)}
\label{MER2d}
\end{algorithm}

\subsubsection{Updating $OP$} \label{OPx}
 
After computing the set of axis-parallel $MER$s in $OP$ containing both the projected points $b_i$ and $b_j$ in its interior,
instead of recomputing the whole ortho-convex polygon again to process the next point $p_{j+1} \in P_b$, we update $OP$ by inserting
$b_j$ in the respective $STAIR$ (see Figure \ref{figtype3}(a)).

Without loss of generality, assume that $b_j$ lies in the first quadrant. After inserting $b_j$ in $STAIR_1$, none of the
points in $Q \in STAIR_1$ will participate in forming $MER$ while processing points $p_k \in P_b$ with $z(p_k)<z(p_j)$. So,
we need to remove the members in $Q$ from $\textit{STAIR}_1$.
This can be done by using the algorithm for stable sorting \cite{KP92}, where the elements in $Q$ will assume the value 1 of the
given (0, 1)-valued selection function $f$, and will stably move to the end of
$\textit{STAIR}_1$. A simple procedure for this task is given in \cite{BMM07} in
the context of stably selecting a sorted subset. We tailored that procedure for
our purpose as follows:
 
We maintain two index variables $\alpha$ and $\beta$;
$\alpha +1$ and $\beta -1$ indicates the starting  and ending positions of the $Q$, respectively. Now, two cases may arise depending
on whether $|Q|=0$ or not. 
\begin{description}
 \item[$|Q| \neq 0$]: See Case 1 of Figure  \ref{figtype3}(b). Here, we  need to remove $Q$ from $STAIR_1$ and
 appropriately insert $p_j$ into the stair. We do this by the following way:
\begin{itemize}
 \item First, by swapping $b_j$ and $b_{\alpha +1}$, we insert $b_j$ in the proper position.
\item Now, we need to move out $b_{\alpha+1}, \ldots, b_{\beta-1}$ from the $STAIR_1$. This can be done by a sequence of swap
operations:  swap($P[r], P[r-(\beta-\alpha-2)]$, starting from $r=\beta$ until $r=\nu_1$, where $\nu_1$ denotes the end of $STAIR_1$.
\item Finally, we set $\nu_1$ as $\nu_1-(\beta-\alpha-2)$.
\end{itemize}

 \item[$|Q| = 0$]: See Case 2 of Figure  \ref{figtype3}(b). Here, we need only to insert $b_j$ into the stair. We do
 this by first swapping $(P[\nu_1+1],P[j])$ and then a sequence of swapping ($P[r], P[r+1]$) starting from $r=\nu_1$  until $r=\beta$.
 Finally, we set $\nu_1$ as $\nu_1+1$.
\end{description}

Clearly, this updating $OP$ needs   $O(|P_i^1|)$ time in the worst case. 

\begin{algorithm}
\SetKwData{index}{index-of} \SetKwData{maxi}{maximum}\SetKwData{mini}{minimumorangec}
 \KwIn{$STAIR_1$ corresponding to $p_i$, the projection $b_j$}
 \KwOut{updated $STAIR_1$}
$\alpha=$ \index {\maxi{$y(b_k$}}: $\forall ~k \in STAIR_1$ and $y(b_k)<y(b_j)$\;

$\beta=$ \index {\mini{$y(b_k$}}: $\forall ~k \in STAIR_1$ and $x(b_k)<x(b_j)$\;
\If{$(\beta- \alpha) > 1 $}
{swap($P[j], P[\alpha +1]$)\;
$k= \beta - \alpha-2$\;
\For{$r\leftarrow \beta$  \KwTo  $\nu_1$}
{swap($P[r], P[r-k]$)\;
}
$\nu_1= \nu_1 -k$\;
}
\Else{
swap($P[\nu_1+1],P[j]$)\;
\For{$r \leftarrow \nu_1$ \KwTo $\beta$}
{swap($P[r], P[r+1]$)\;
}
$\nu_1= \nu_1 +1$\;
}
\caption{ $\USt_1$($j$)}
\label{updatestair3d}
\end{algorithm}

After computing the largest {\it type-3} axis-parallel $LRC$ with $p_i$ on its top boundary, we need to sort the points again with
respect to their $z$-coordinates for the processing of $p_{i+1}$. 

\remove{
\begin{lemma}\label{count3}
 The number of {\em type-3} $cLRC$ is $O(m^3)$ in the worst case.
\end{lemma}
\begin{proof}
Let us consider the $cLRC$s' with top and bottom faces passing through $p_i,p_j$ ($\in P_b$) respectively. 
Surely, the other four sides of each of these $cLRC$s' are defined by the points $p_k \in P_b$, $k \neq i,j$ lying inside
the horizontal slab defined by $H(p_i)$ and $H(p_j)$. Let $B_{ij}$ be the projections of these points on $H(p_i)$. Observe that, 
(i) each $cLRC$, if exists, corresponds to an MER on $H(p_i)$ among the points $B_{ij}$, and (ii) if there exists any point 
of $B_{ij}$ in the region bounded by the lines $X=x(p_i)$, $X=x(p_j)$, $Y=y(p_i)$, $Y=y(p_j)$, then no $cLRC$ exists with 
$p_i$ and $p_j$ on its top and bottom boundaries respectively.

As earlier (Section \ref{compMER3}), consider the $STAIR$s constructed by the projections
of the points in $B_{ij}$ on the plane $H(p_i)$. Consider the
shaded region of $OP$ in Figure \ref{count3x} that contains both the points $b_i,b_j$ in its interior. Each MER in this region contributes a $CLRC$.
Let $b_j$ lies in the first quadrant defined by $b_i$; $b_{\phi_1}, b_{\phi_2}$ be a pair of consecutive points in $STAIR_1$
that generates $\eta(b_{\phi_1}, b_{\phi_2})$ number of $MER$s (as described
in Section \ref{compMER3}), which can be at most $j-i = O(m)$. After processing $p_j$, we consider $B_{i(j+1)} = B_{ij} \cup \{b_j\}$, 
and there exists no MER with its one corner defined by $b_{\phi_1}, b_{\phi_2}$. Thus, if $b_{\phi_1}, b_{\phi_2}, \ldots, b_{\phi_k})$ are the 
consecutive points on $STAIR_1$ such that $(b_{\phi_k}, b_{\phi_{k+1}})$ defines MER(s) in $OP$, then after processing $p_j$, 
the corners of $STAIR_1$ defined by the points $b_{\phi_1}, b_{\phi_2}, \ldots, b_{\phi_k})$ can be deleted as described in Subsection \ref{OPx}.

During the processing of $p_i$ (defining the top boundary), each points $p_j$ in quadrant 1 generates at most 2 corners in $STAIR_1$. 
Thus at most $O(m)$ corners are generated in $STAIR_1$, and they can generate at most $O(m^2)$ MERs. The same argument holds for other $STAIR$s also. 
Considering all the points $p_i, i=1,2,\ldots, m$ defining the top boundary, the result follows.  
\end{proof}

\begin{figure}[t]
\vspace{-0.1in} 
\centering
\includegraphics[scale=0.5]{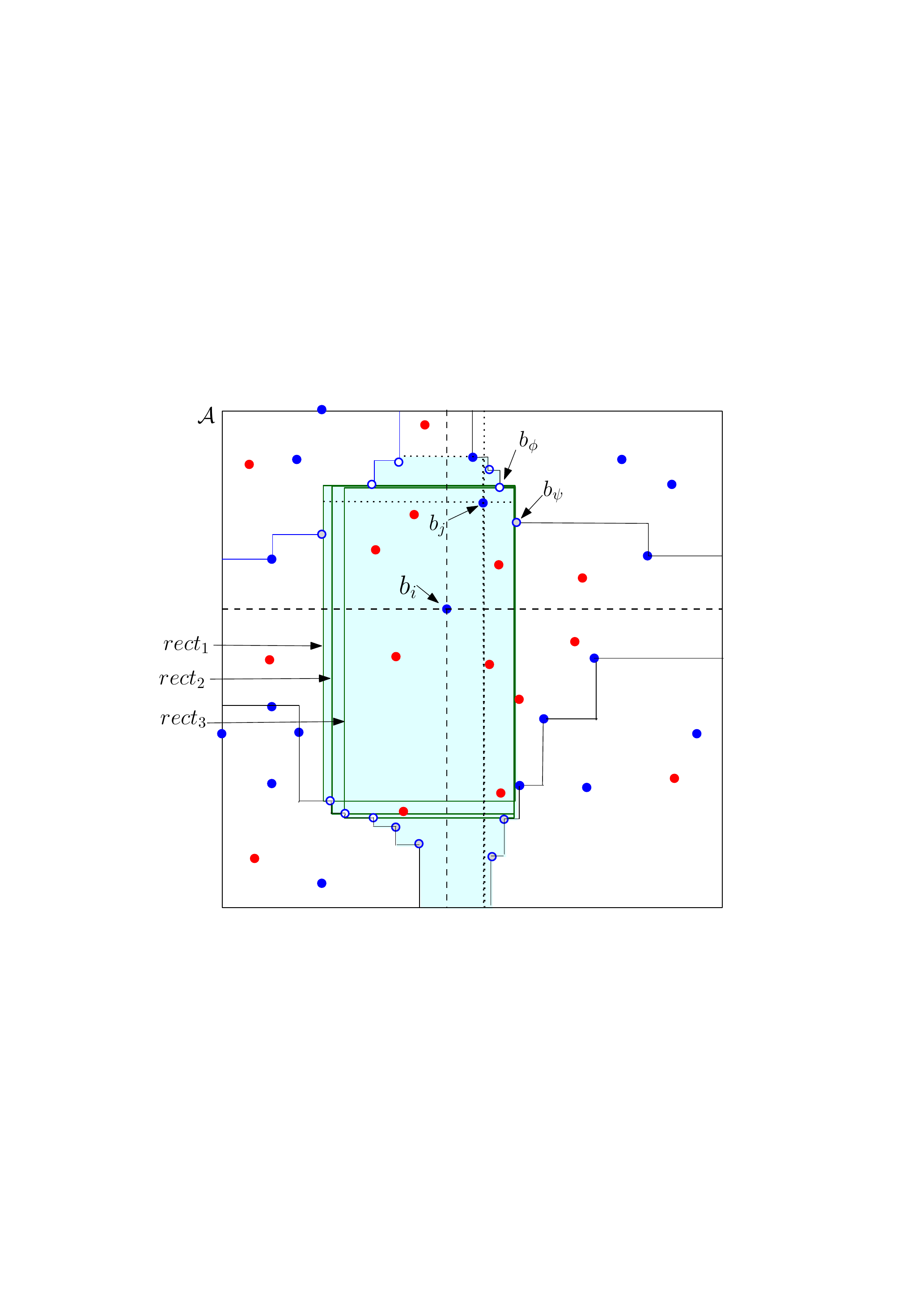}\vspace{-0.1in} 
\caption{Computation of {\it type-3} $LRC$ }\vspace{-0.15in} 
\label{count3x}
\end{figure}
}
Thus we have the following result:

\begin{lemma} \label{res-type3}
The time required for processing $p_i$ is $O(m^2+C_i'\sqrt{n} +mn\log n)$ in the worst case, where $C_i'$ is the number of
{\em type-3} axis-parallel $LRC$s with $p_i$ on its top boundary. 
\end{lemma}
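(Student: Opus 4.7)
The plan is to account separately for the sub-tasks that Section~\ref{type-3} identifies in the phase of processing $p_i$, and then to sum their costs over the at most $m-i$ blue points $p_j$ encountered by the downward sweep from $H(p_i)$.

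First I would handle the once-per-$p_i$ bookkeeping: partitioning the subarray $P_b[i+1,\ldots,m]$ into the four contiguous quadrant blocks $P_b^\theta$ around $p_i$, each kept sorted in decreasing order of $z$-coordinate, initialising the sweep indices $\chi_\theta$, and setting up the four empty $\textit{STAIR}_\theta$. By standard in-place partition arguments this costs $O(m)$ and is dominated by later terms.

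Next, for each $p_j$ hit by the sweep, four things happen. \emph{(a)} Construct the in-place 2-d tree ${\cal T}_{ij}$ over the $|R_{ij}|=O(n)$ red points lying between $H(p_i)$ and $H(p_j)$; by Lemma~\ref{preprocess} this is $O(n\log n)$, so summed over the $O(m)$ sweep events it contributes the $O(mn\log n)$ term. \emph{(b)} Locate the stair index ranges $[\alpha,\beta]$ and $[\mu,\nu]$ of corners in $\textit{STAIR}_1\cup \textit{STAIR}_2$ (or the analogous stairs when $b_j$ lies in another quadrant) that can form the north side of a feasible $MER$ enclosing both $b_i$ and $b_j$; since each stair is stored in $y$-sorted order at the front of its quadrant block, this is an $O(|OP|)=O(m)$ scan. \emph{(c)} Enumerate the $c_{ij}$ resulting MERs via Algorithm~\ref{MER2d}, each produced by $O(1)$ corner work on the stairs, and for each perform one rectangular range-counting query in ${\cal T}_{ij}$; by Lemma~\ref{count} each query costs $O(\sqrt{n})$, so this step is $O(m+c_{ij}\sqrt{n})$. \emph{(d)} Update $OP$ by inserting $b_j$ into the appropriate $\textit{STAIR}_\theta$ using the in-place swap routine of Section~\ref{OPx}, at cost $O(|\textit{STAIR}_\theta|)=O(m)$.

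Summing (b) and (d) over the at most $m$ sweep events gives $O(m^2)$; summing (a) gives $O(mn\log n)$; summing (c) gives $O(m^2+\sqrt{n}\sum_j c_{ij})=O(m^2+C_i'\sqrt{n})$, since by definition $C_i'=\sum_j c_{ij}$ is exactly the number of type-3 $cLRC$s produced with $p_i$ on the top face. Adding the $O(m)$ pre-phase bookkeeping yields the claimed $O(m^2+C_i'\sqrt{n}+mn\log n)$. The main obstacle I expect is step \emph{(c)}: one has to argue that, once the feasible corner ranges have been located, Algorithm~\ref{MER2d} indeed emits each MER in $O(1)$ amortised work beyond the range-count, walking a single pass over the stair corners in the cases $\phi'>\phi$ versus $\phi'\le\phi$ (and symmetrically for $\textit{STAIR}_2$ with $\phi_1,\phi_2$), with no hidden search, sort, or repeated corner re-scan. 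A secondary check is to justify that rebuilding ${\cal T}_{ij}$ from scratch for every $p_j$ — which is the cleanest way to keep the whole computation in-place — really gives $O(n\log n)$ per rebuild in the in-place model, and this is immediate from Lemma~\ref{preprocess}.
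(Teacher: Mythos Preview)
Your proposal is correct and follows essentially the same decomposition as the paper: per-$p_j$ cost for MER generation plus stair maintenance summing to $O(m^2)$, $C_i'$ range-counting queries at $O(\sqrt{n})$ each, and $O(n\log n)$ per rebuild of ${\cal T}_{ij}$ giving $O(mn\log n)$. The only cosmetic difference is that the paper phrases the $O(m^2)$ term as $\sum_{j}|P_{ij}|=O((m-i)^2)$ (where $|P_{ij}|=j-i$ is the slab population) rather than as $O(m)$ work per sweep event times $O(m)$ events, but the arithmetic is the same; your explicit flag on step~(c) is in fact a point the paper simply asserts without further argument.
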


\begin{proof}
The worst case time required for computing $MER(p_i,p_j)$ is $O(|P_{ij}|+C_{ij})$, where $P_{ij}$ denotes the number of points
inside the horizontal slab bounded by $H(p_i)$ and $H(p_j)$, and $C_{ij}$ denotes the number of axis-parallel $MER$s
containing both $b_i$ and $b_j$ inside $OP$ with the projection of points $B_{ij}$ on $H(p_i)$. In order to compute
the largest {\em type-3} axis-parallel $LRC$ with $p_i$ on its top boundary, we need to compute $MER(b_i,b_j)$ for all
$j>i$, $C_i'=\displaystyle\sum_{j=i+1}^n C_{ij}$, and $\displaystyle\sum_{j=i+1}^n |P_{ij}| =O((m-i)^2)$. For each of the cuboid $C_i'$, the
in-place counting query in the corresponding ${\cal T}_{ij}$ requires $\sqrt{n}$ time (using Lemma \ref{count}). The
last part of the time complexity follows due to the fact that, for every point $p_j, j=i+1,\ldots,m$ we need to construct the in-place
$2d$-tree. Also after the processing of each $p_i\in P_b$, the sorting step takes $O(n\log n)$ time. 
\end{proof}

Lemma \ref{res-type1}, \ref{res-type2} and \ref{res-type3} lead to the following result. 

\begin{theorem}
The worst case time complexity of our in-place algorithm for computing the axis-parallel largest monochromatic cuboid ($LMC$)
is $O(m^3\sqrt{n}+m^2n\log n)$, and it takes $O(1)$ extra space.
\end{theorem}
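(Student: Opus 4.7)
The plan is to combine Lemmas~\ref{res-type1}, \ref{res-type2}, and \ref{res-type3}, sum the three contributions, and verify that the algorithm as a whole stays in place. First, observe that the $LMC$ is the larger of the $LRC$ and the $LBC$; by color symmetry the $LBC$ is computed by the same algorithm with the roles of $P_r$ and $P_b$ swapped, so it suffices to bound the cost of the $LRC$ computation. Since an $LRC$ belongs to exactly one of the three types and the three subroutines are executed sequentially, the total time is the sum of the three bounds. Lemma~\ref{res-type1} contributes $O(m^2\sqrt{n}+n\log n)$ and Lemma~\ref{res-type2} contributes $O(m^3\sqrt{n}+mn\log n)$, both of which are already subsumed by the target bound.

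The non-trivial step is the type-3 aggregation. Lemma~\ref{res-type3} bounds the cost of processing a single $p_i\in P_b$ by $O(m^2+C_i'\sqrt{n}+mn\log n)$, where $C_i'$ counts the type-3 $cLRC$s with $p_i$ on the top face. Summing over $i=1,\ldots,m$ yields
\[
O\!\left(m^3 + C'\sqrt{n} + m^2 n\log n\right), \qquad C'=\sum_{i=1}^{m}C_i'.
\]
Every type-3 $cLRC$ is a maximal empty axis-parallel box among the $m$ blue points in $\IR^3$, so the classical $O(m^3)$ bound of~\cite{DumitrescuJ13,kaplan2008} (already cited in Section~\ref{3d}) gives $C'=O(m^3)$. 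Substituting yields $O(m^3\sqrt{n}+m^2 n\log n)$ for type-3, which dominates the other two cases and matches the time bound in the statement.

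For the space bound I would check the subroutines in turn. The input array is split in place into $P_r$ and $P_b$; the in-place $2$-d tree on $P_r$ is built via Lemma~\ref{preprocess} using $O(1)$ extra workspace, and $P_b$ is sorted by $z$-coordinate using an in-place heapsort. Each range counting query uses $O(1)$ workspace by Lemma~\ref{count}. During each type-3 phase, the four $\textit{STAIR}_\theta$ lists are stored inside the already partitioned quadrant sub-arrays of $P_b$, and only a constant number of indices ($\chi_\theta,\nu_\theta,m_\theta$) plus the running optimum are carried as scalar state; the \updtm\ and MER-enumeration procedures use only $O(1)$ auxiliary variables, as described in Sections~\ref{compMER3} and~\ref{OPx}. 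The main obstacle in the whole argument is precisely the type-3 summation: the per-$p_i$ bound $C_i'=O(m^2)$ applied naively would give $O(m^4\sqrt{n})$ overall, so matching the claimed time depends on aggregating across all $p_i$ before invoking the global $O(m^3)$ bound on maximal empty boxes in $\IR^3$.
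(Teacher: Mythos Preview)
Your proposal is correct and follows the same route as the paper, which simply states that Lemmas~\ref{res-type1}, \ref{res-type2}, and~\ref{res-type3} ``lead to'' the theorem. You have usefully made explicit the one step the paper leaves implicit: summing the per-$p_i$ bound of Lemma~\ref{res-type3} over all $i$ and invoking the $O(m^3)$ bound on maximal empty boxes in $\IR^3$ (cited earlier in Section~\ref{3d}) to control $\sum_i C_i'$; your remark that a naive $C_i'=O(m^2)$ per phase would overshoot to $O(m^4\sqrt{n})$ is exactly the right diagnosis of why the global box count is needed.
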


\subsection*{Acknowledgment:} The authors acknowledge the valuable constructive suggestions given by the reviewer regarding the presentation of the paper.

\end{document}